\newtheorem{theorem}{Theorem}
\begin{document}
	
	\title{Average AoI in Multi-Source Systems with Source-Aware Packet Management 
	\thanks{ This research has been financially supported by the Infotech Oulu, the Academy of Finland (grant 323698), and Academy of Finland 6Genesis Flagship (grant 318927). M. Codreanu would like to acknowledge the support of the European Union's Horizon 2020 research and innovation programme under the Marie Sk\l{}odowska-Curie Grant Agreement No. 793402 (COMPRESS NETS). The work of M. Leinonen has also been financially supported in part by the Academy of Finland (grant 319485). 
		M. Moltafet would like to acknowledge the support of Finnish Foundation for Technology Promotion,  HPY Research Foundation, and Riitta ja Jorma J. Takanen Foundation.
}
	\thanks{Mohammad Moltafet and Markus Leinonen are with the Centre
		for Wireless Communications--Radio Technologies, University of Oulu,
		90014 Oulu, Finland (e-mail: mohammad.moltafet@oulu.fi; markus.leinonen@oulu.fi), and
		Marian Codreanu is with Department of Science and Technology, Link\"{o}ping University, Sweden (e-mail: marian.codreanu@liu.se)
	}
\thanks{	
	Preliminary results of this paper were presented in \cite{9155314}.
}
\author{
		Mohammad~Moltafet, Markus~Leinonen, and Marian~Codreanu
}
}
	\maketitle
	
	\vspace{-.5\baselineskip}
\begin{abstract}
We study the information freshness under three different source aware packet management policies  in a status update system consisting of two independent sources and one server. The packets of each source are generated according to the Poisson process  and the packets are served according to an exponentially distributed service time.  We derive the average age of information (AoI) of each source using the stochastic hybrid systems (SHS) technique for each packet management policy. In Policy 1, the queue can contain at most two waiting packets at the same time (in addition to the packet under service), one packet of source 1 and one packet of source 2. When the server is busy at an arrival of a packet, the possible packet of the same source waiting in the queue (hence, {source-aware}) is replaced by the arrived fresh packet. In Policy 2, the system (i.e., the waiting queue and the server)  can contain at most two packets, one from each source. When the server is busy at an arrival of a packet, the possible packet of the same source in the system is replaced by the fresh packet. Policy 3 is similar to Policy 2  but it does not permit  preemption in service, i.e., while a packet is under service all new arrivals from the same source are blocked and cleared.
Numerical results are provided to assess the fairness between sources and the sum average AoI of the proposed policies.

	\emph{Index Terms--} Information freshness, age of information (AoI), multi-source queueing model, stochastic hybrid systems (SHS), packet management.
\end{abstract}

	\section{Introduction}\label{Introduction}
In many Internet of things applications and cyber-physical control systems, freshness of the status information at receivers is a critical factor. Recently, the age of information (AoI) was proposed as a destination-centric  metric   to measure the information freshness  in status update systems \cite{8469047,6195689,6310931}. A status update packet contains the measured value of a monitored process and a time stamp representing the time when the sample was generated. Due to wireless channel access, channel errors, and fading, etc., communicating a status update packet through the network experiences a random delay. If at a time instant $t$, the most recently received status update packet contains the time stamp $U(t)$, AoI is defined
as the random process $\Delta(t)=t-U(t)$. Thus, the AoI measures for each sensor the time elapsed since the last received status update packet was generated at the sensor. The average AoI is the most commonly used metric to evaluate the AoI \cite{8187436,8901143,moltafet2019power,6195689,6310931,5984917,7415972,8006592,7541764,6284003,9099557,8469047,8406928,8437591,8406966,8437907,9013935,9086190,8006504}.

The work \cite{6195689} is the seminal queueing theoretic work on the AoI in which the authors derived the average AoI for a single-source first-come first-served (FCFS) M/M/1 queueing model. 
In \cite{6875100}, the authors proposed peak AoI as an alternative  metric to evaluate the information freshness.
 The work \cite{6284003} was the first to investigate the average AoI in a multi-source setup. The authors of \cite{6284003} derived the average AoI for a multi-source FCFS M/M/1  queueing model.
The authors of \cite{7282742} considered a multi-source M/G/1 queueing system  and optimized the arrival rates of each source to minimize the peak AoI.  The authors of \cite{9099557} derived an exact expression for the average AoI for a multi-source FCFS M/M/1  queueing model and  an approximate expression for the average AoI for a multi-source FCFS M/G/1   queueing model having a general service time distribution.

The above works address the FCFS policy under the infinite queue size. However, it has been shown that the AoI can be significantly decreased   if there is a possibility to apply \emph{packet management} in the system (either in the queue or server) \cite{8469047,6310931,7415972,8006504,7541764,8406928,8437591,8406966,8437907,9013935}.  
To this end, the  average  AoI for a last-come first-served  (LCFS)  M/M/1  queueing model with preemption was analyzed in \cite{6310931}.
The average AoI for different packet management policies in a single-source M/M/1 queueing model were derived in \cite{7415972}. 
The authors of \cite{8006504}  derived a closed-form expression for the average
AoI of a single-source  M/G/1/1 preemptive queueing model (where the last entry in the Kendall notation shows the total capacity of the queueing system; 1 indicates that there is one packet under service whereas the queue holds zero packets).
The work \cite{7541764} considered a single-source LCFS queueing model where the packets arrive according to a Poisson process and the service time follows a gamma distribution. They derived the average AoI for two packet management policies, LCFS  with and  without preemption. The closed-form expressions for the average AoI and average peak AoI in a multi-source M/G/1/1 preemptive queueing model were derived in \cite{8406928}.

In \cite{8469047},  the authors  gave an in-depth introduction on a powerful technique, \textit{stochastic hybrid systems} (SHS), that can be used to evaluate the AoI in different  continuous-time queueing systems. They considered a multi-source queueing model in which the packets of different sources are generated according to the Poisson process and      
 served according to an exponentially distributed service time. The authors derived the average AoI for two packet management policies: 1) LCFS with preemption under service (LCFS-S), and 2) LCFS with preemption only
in waiting (LCFS-W). Under the LCFS-S policy, a new arriving packet preempts any packet that is currently
under service (regardless of the source index). Under the LCFS-W policy, a new arriving packet replaces
any older packet waiting in the queue (regardless of the source index); however, the new
packet has to wait for any packet under service to finish.

Since its establishment as an efficient tool for the AoI analysis \cite{8469047}, the SHS technique has recently been applied to derive the average AoI for various queueing models and packet management policies \cite{8437591,8406966,8437907,9013935,9086190,9103131}.
 {The authors of \cite{8437591} studied  a multi-source  M/M/1 queueing model in which sources have different priorities and proposed two packet management policies: 1) there is no waiting room
 and an update under service is preempted on arrival of an equal or
 higher priority update, and 2) there is a waiting room for at
 most one update and preemption is allowed in waiting but not in
 service.} In  \cite{8406966}, the author considered a single-source M/M/1 status update system in which the updates follow a route through a series of network nodes where each node is an LCFS queue that supports preemption in service. 
In \cite{8437907}, the author considered a single-source   LCFS queueing model with multiple servers with preemption in service.
The authors of  \cite{9013935} considered a multi-source LCFS queueing model with multiple servers that employ preemption in service. In \cite{9086190}, the authors derived the average AoI for  a multi-source FCFS M/M/1 queueing model with an infinite queue size. In \cite{9103131}, the authors studied moments and the moment generating function of the AoI.

\subsection{Contributions}
In this paper, we consider a status update system in which two independent sources generate packets according to the Poisson process  and the packets are served according to an exponentially distributed service time. To emphasize the importance of minimizing the average AoI of each individual source and  enhance the fairness between different sources in the system, we propose three different \emph{source-aware} packet management policies.

In Policy 1, the queue can contain at most two waiting packets at the same time (in addition to the packet under service), one packet of source 1 and one packet of source 2.  When the server is busy and a new packet  arrives, the possible packet of the same source waiting in the queue (not being served) is replaced by the fresh packet.
In Policy 2, the system (i.e., the waiting queue and the server)  can contain at most two packets, one from each source. When the server is busy at an arrival of a packet, the possible packet of the same source  either
waiting in the queue or being served  is replaced by the fresh packet.
 Policy 3 is similar to Policy 2  but it does not permit  preemption in service, i.e., while a packet is under service all new arrivals from the same source are blocked and cleared.

We derive the average AoI for each source under the proposed packet management policies using the SHS technique. By numerical experiments, we investigate the effectiveness  of the  proposed packet management policies in terms of the sum average AoI and fairness between different sources. The results show that our proposed policies  provide better fairness than that of the existing policies. In addition, Policy 2 outperforms the existing policies in terms of the sum average AoI. To the best of our knowledge, the proposed policies have not been considered and analyzed in the AoI context earlier.

\subsection{Related Works}
The most related works to our paper are \cite{8469047,8437591,9013935}. Each of these works considers a multi-source queueing model with such a packet management policy where the different sources can preempt the packets of each other in the system (\emph{source-agnostic} preemption). In addition, in \cite{8437591}, an arriving packet is discarded if the server is currently serving a packet of another source having higher priority (service priority). These packet management policies result in queueing systems where the server mostly serves either the packets of a source with a high packet arrival rate  \cite{8469047,9013935}, or the packets of a source with high service priority \cite{8437591}. In this regard,  these policies are not appropriate for the applications in which besides the sum average AoI,  the average AoI of each individual  source is important. Note that differently from \cite{8469047,8437591,9013935}, all our policies employ source-aware preemption in the system in the sense that an arriving packet can preempt \emph{only} a packet with the same source index in the system. This promotes fairness, as shown by our numerical experiments.

\subsection{Organization}
The paper is organized as follows. The system model and problem definition are  presented in Section \ref{System Model}.
The basics of the SHS technique are presented in Section \ref{Introduction to the SHS Technique}. The average AoI  for each source under different packet management policies is derived in Section \ref{AoI Analysis Using the SHS Technique}.
Numerical results are presented in Section \ref{Numerical Results}.
Finally, concluding remarks are expressed in Section \ref{Conclusions}.

 \section{System Model and Summary of the Main Results}\label{System Model}
We  consider a status update system consisting of two independent sources\footnote{We consider two sources for simplicity of presentation; the same methodology as used in this paper can be applied for more than two sources. However, the complexity of the calculations increases exponentially with the number of sources.}, one server, and one  sink, as depicted in Fig. \ref{AoIs}.
Each source observes a random process at random time instants. The  sink is interested in timely information about the status of these random processes. Status updates are transmitted as packets, containing the measured value of the monitored process and a time stamp representing the time when the sample was generated. We assume that  the packets  of sources  1 and 2   are generated according to the Poisson process with rates  $\lambda_1$ and $\lambda_2$, respectively, and      
the packets are served according to an exponentially distributed service time with mean ${1}/{\mu}$. Let  $\rho_1={\lambda_1}/{\mu}$ and $\rho_2={\lambda_2}/{\mu}$ be the  load of  source 1 and 2, respectively. Since packets of the sources are generated  according to the Poisson process and the sources are independent, the  packet generation in the system follows the Poisson process with rate
$\lambda=\lambda_1+\lambda_2$. The overall load in the system is
$\rho=\rho_1+\rho_2={\lambda}/{\mu}$.

In the next subsections, we first explain each packet management policy, and then, give a formal definition of AoI. 
	\begin{figure}[t]
	\centering
	\subfloat[Policy 1: The queue can contain at most two waiting packets at the same time (in addition to the packet under service), one packet of source 1 and one packet of source 2;  when the server is busy and a new packet  arrives, the possible packet of the same source waiting in the queue (not being served) is replaced by the fresh packet.]{\includegraphics[width=0.79\linewidth]{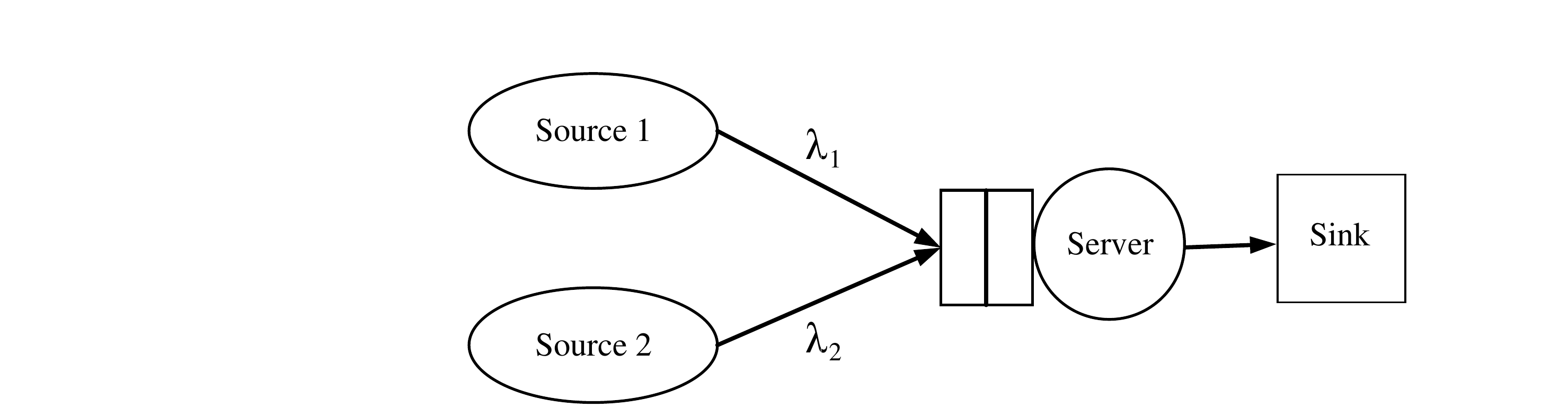}}\\
	\subfloat[ Policies 2 and 3: The system (i.e., the waiting queue and the server)  can contain at most two packets, one from each source.	In Policy 2, when the server is busy and a new packet arrives, the possible packet of the same source  either
	waiting in the queue or being served  is replaced by the fresh packet. Policy 3 is similar to Policy 2  but it does not permit  preemption in service.]{\includegraphics[width=0.76\linewidth]{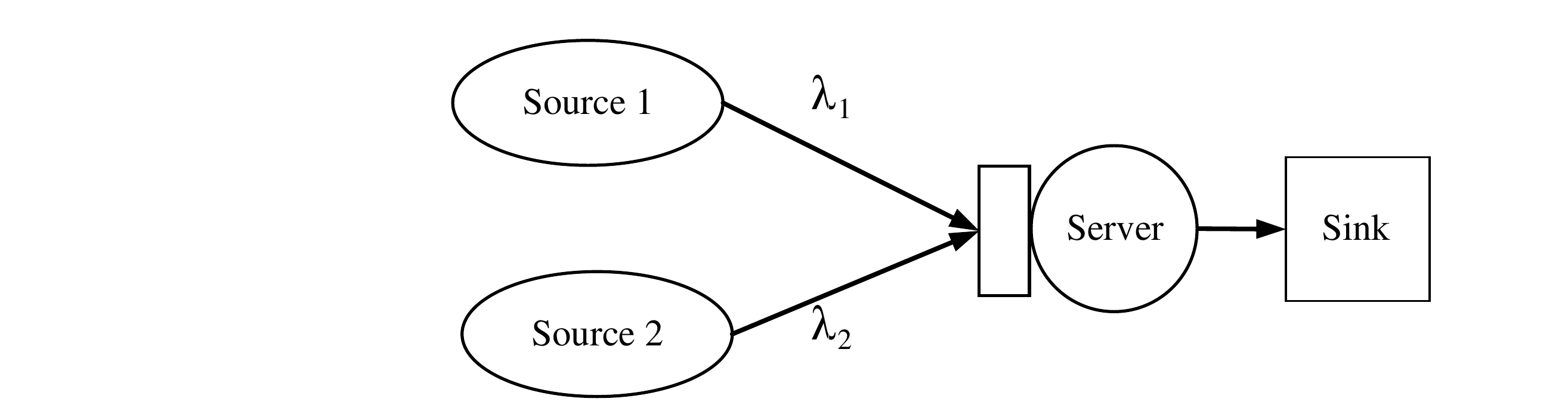}}\vspace{-4mm}
	\caption{
The packet management policies. 
	}  
		\vspace{-12mm}			
	\label{AoIs}
\end{figure}

\subsection{Packet Management Policies}\label{II-A}

The structure of the queuing system for all considered policies is illustrated in Fig.~\ref{AoIs}. In all policies, when the system is empty, any arriving packet immediately enters the server. However, the policies differ in how they handle the arriving packets when the server is busy.

In Policy 1  (see Fig.~\ref{AoIs}(a)), when the server is busy and a new packet arrives, the possible packet of the same source waiting in the queue (not being served) is replaced by the fresh packet.

In Policy 2 (see Fig.~\ref{AoIs}(b)), when the server is busy and a new packet arrives, the possible packet of the same source  either waiting in the queue or being served (called \textit{self preemption}) is replaced by the fresh packet.

Policy 3 is similar to Policy 2  but it does not permit  preemption in service (see Fig.~\ref{AoIs}(b)). While a packet is under service all new arrivals from the same source are blocked and cleared. However, the packet waiting in the queue is replaced upon the arrival of a newer one from the same source. It is also interesting to remark that this policy is also similar to Policy 1 but it has a one unit shorter waiting queue.

	\subsection{AoI Definition}
	For each source, the AoI  at the destination is defined as the time elapsed since the last  successfully received packet was generated. Formal definition of the AoI is given next.
	
Let $t_{c,i}$ denote the time instant at which the $i$th status   update packet of source $c$ was generated, and $t'_{c,i}$ denote the time instant at which this packet  arrives at the sink. At a time instant $\tau$,  the  index of the most recently received packet of source $c$ is given by
$
N_c(\tau)=\max\{i'|t'_{c,i'}\le \tau\},
$
and the time stamp of the most recently received packet of source $c$ is
$
U_c(\tau)=t_{c,N_c(\tau)}.
$
The AoI of source $c$ at the destination is defined as the random process
$
\Delta_{c}(t)=t-U_c(t).
$
Let $(0,\tau)$ denote   an observation interval. Accordingly,  the time average AoI of  the source $c$ at the sink, denoted as  $\Delta_{\tau,c}$, is  defined as
$$
\displaystyle\Delta_{\tau,c}=\dfrac{1}{\tau}\int_{0}^{\tau}\Delta_{c}(t)\mathrm{d}t.
$$
The  average AoI of source $c$, denoted by $\Delta_{c}$, is defined as
\begin{equation}\label{AoIeq}
\Delta_{c}=\lim_{\tau\to\infty}\Delta_{\tau,c}.
\end{equation}

\subsection{Summary of the Main Results}
In this paper, we derive the average AoI for each source under the Policy 1, Policy 2, and Policy 3 using the SHS technique.  The derived results are summarized by the following three theorems.

\begin{theorem}\label{theo1}
	 The average AoI of source 1 under Policy 1 is given as
\begin{align}\nonumber
\Delta_{1}=\dfrac{\sum_{k=0}^7 \rho_1^k\eta_k}{\mu \rho_1\left(1+\rho_1\right)^2\sum_{j=0}^4 \rho_1^j\xi_j},
\end{align}
where
\begin{align} \nonumber
&\eta_0=\rho_2^4+\!2\rho_2^3+3\rho_2^2+2\rho_2+1,~~~~~~~~~~~
\eta_1=7\rho_2^4+15\rho_2^3+21\rho_2^2+14\rho_2+6,\\\nonumber&
\eta_2=17\rho_2^4+46\rho_2^3+64\rho_2^2+42\rho_2+16,\,\,\,\,\,
\eta_3=15\rho_2^4+73\rho_2^3+118\rho_2^2+78\rho_2+26,\\\nonumber&
\eta_4=5\rho_2^4+52\rho_2^3+124\rho_2^2+102\rho_2+30,\,\,\,\,
\eta_5=15\rho_2^3+66\rho_2^2+79\rho_2+24,\\\nonumber&
\eta_6=15\rho_2^2+31\rho_2+11,~~~~~~~~~~~~~~~~~~~~~\eta_7=5\rho_2+2,\\\nonumber&
\xi_0=\rho_2^4+2\rho_2^3+3\rho_2^2+2\rho_2+1,~~~~~~~~~~~~~
\xi_1=2\rho_2^4+6\rho_2^3+9\rho_2^2+7\rho_2+3,\,\,\\\nonumber&
\xi_2=6\rho_2^3+12\rho_2^2+10\rho_2+4,~~~~~~~~~~~~~~~~
\xi_3=6\rho_2^2+8\rho_2+3,\,\,\,\\\nonumber&
\xi_4=2\rho_2+1.
\end{align}
\end{theorem}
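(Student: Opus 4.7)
The plan is to apply the stochastic hybrid systems (SHS) machinery exactly in the form recalled in Section \ref{Introduction to the SHS Technique}. First, I would identify the discrete state of the Markov chain describing the queue/server occupancy for Policy 1. Since the server can be empty or serving a packet from source $1$ or $2$, and the two waiting slots can each be empty or occupied by their respective source, the reachable configurations are: the empty system; server busy with source $c$ and empty queue; server busy with source $c$ with only source $1$ waiting; server busy with source $c$ with only source $2$ waiting; server busy with source $c$ with both waiting slots filled. This yields nine discrete states. The transitions are driven by three independent events: a source~$1$ arrival (rate $\lambda_1$), a source~$2$ arrival (rate $\lambda_2$), and a service completion (rate $\mu$). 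For each transition I would write down the destination state dictated by Policy~1, paying particular attention to the source-aware preemption rule: a new arrival of source $c$ only replaces the waiting packet of the same source (if any), and does not disturb the server.

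Next, because Theorem~\ref{theo1} concerns source~$1$, I would attach to every discrete state a continuous age vector $\mathbf{x}=(x_0,x_1,x_2)$, where $x_0$ is the current age of source~$1$ at the sink, $x_1$ is the age of the source~$1$ packet under service (if the server holds one), and $x_2$ is the age of the source~$1$ packet in the queue (if any). For each transition $l$ I would specify the $3\times 3$ reset matrix $A_l$: an arrival of a fresh source~$1$ packet zeros out the coordinate assigned to the slot it enters (and, under preemption, wipes the previous one); a service completion of a source~$1$ packet resets $x_0$ to $x_1$ and drops $x_1$; arrivals/completions of source~$2$ packets only shift bookkeeping and leave $x_0$ and any surviving source~$1$ ages intact.

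With the SHS data in place, I would carry out the two standard computations. First, I would solve the balance equations of the embedded continuous-time Markov chain for the stationary probability vector $\boldsymbol{\pi}$ over the nine discrete states, obtaining each $\pi_q$ as a rational function of $\rho_1$ and $\rho_2$. Second, I would apply the linear SHS relations of \cite{8469047} to solve for the vectors $\bar{\mathbf v}_q=\mathbb{E}[\mathbf x\,\mathbf 1_{Q=q}]$: for each $q$ one obtains an equation of the form
\begin{equation*}
\bar{\mathbf v}_q\,\Bigl(\sum_{l\in\mathcal L_q}\lambda^{(l)}\Bigr)=\mathbf{1}\pi_q+\sum_{l\in\mathcal L'_q}\lambda^{(l)}\,\bar{\mathbf v}_{q_l}A_l,
\end{equation*}
a linear system whose first-column components, summed across states, deliver $\Delta_1$ via $\Delta_1=\sum_q \bar v_{q,0}$.

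The conceptual content is therefore routine once the state space and reset maps are correctly enumerated; the real obstacle is algebraic. The nine-state chain produces a linear system whose symbolic solution is a rational function of two parameters $\rho_1,\rho_2$, and collapsing it into the compact form of Theorem~\ref{theo1}, with the explicit coefficients $\eta_0,\ldots,\eta_7$ in the numerator and $\xi_0,\ldots,\xi_4$ in the denominator (together with the clean factor $\mu\rho_1(1+\rho_1)^2$), requires careful bookkeeping and a substantial amount of polynomial simplification. I would organize this by first factoring out the Markov-chain normalizing constant, then grouping $\bar{\mathbf v}_q$ contributions by the power of $\rho_1$ they carry, which is the natural grading that makes the polynomials $\eta_k(\rho_2)$ and $\xi_j(\rho_2)$ emerge.
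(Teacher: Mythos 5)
Your high-level recipe---enumerate the discrete chain, write the reset maps, solve \eqref{eqrt01}--\eqref{erwq} for $\bar{\boldsymbol{\pi}}$ and \eqref{asleq} for the $\bar{\mathbf v}_q$, then sum the first components via \eqref{AOIANAL}---is exactly the paper's approach. The problem is your discrete state space, and the error is not cosmetic. You model the waiting room as two dedicated slots (one per source) and collapse ``server busy with both waiting slots filled'' into a single state per server occupancy, thereby discarding the arrival order of the two waiting packets. Under Policy 1 the queue is served in FCFS order, so when one source~1 and one source~2 packet are both waiting, the chain must remember which is at the head: if the source~1 packet is first it is delivered after one further service completion, whereas if it is second it must survive two, during which it can still be replaced by fresh source~1 arrivals. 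The paper's chain (Table \ref{table-1}) has exactly six states because it keeps this ordering (states $4$ and $5$) while dropping the source identity of the packet in service, which is irrelevant once $x_1$ encodes what $\Delta_1$ would become upon that packet's delivery. Appendix \ref{Valuesof v appendix} confirms the ordering matters: $\bar v_{40}$ and $\bar v_{50}$ are different rational functions even though states $4$ and $5$ have equal stationary probabilities. Your nine-state chain does the opposite of what is needed---it carries the unneeded server-source label and loses the needed queue order---so the transition out of your ``both waiting'' state at a service completion is not even well defined, and the computation would not reproduce Theorem \ref{theo1}.

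A secondary point: even after correcting the state space, what you have written is a plan rather than a proof. The actual content of the paper's argument is the explicit table of reset maps (Table \ref{table-2}), the stationary vector \eqref{proeqq}, and the solution of the $24$-variable linear system yielding \eqref{VAoI0}--\eqref{VAoI5}; none of these are produced in your proposal, so the claimed coefficients $\eta_k$ and $\xi_j$ remain unverified.
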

\begin{proof}
The proof of Theorem \ref{theo1} appears in  Section \ref{FCFS Prioritized Packet Management Policy} of this paper.
\end{proof}

\begin{theorem}\label{theo2}
	The average AoI of source 1 under Policy 2 is given as
	\begin{align}\nonumber
	\Delta_{1}=\dfrac{{(\rho_2+1)}^2+\sum_{k=1}^5 \rho_1^k\tilde\eta_k}{\mu \rho_1\left(1+\rho_1\right)^2\big(\rho_1^2(2\rho_2+1)+(\rho_2+1)^2(2\rho_1+1)\big)},
	\end{align}
	where
	\begin{align} \nonumber
	&\tilde\eta_1=6\rho_2^2+11\rho_2+5,~~~~~~~~~~~~
	\tilde\eta_2=13\rho_2^2+24\rho_2+10,\\\nonumber&
	\tilde\eta_3=10\rho_2^2+27\rho_2+10,~~~~~~~~~
	\tilde\eta_4=3\rho_2^2+14\rho_2+5,\\\nonumber&
	\tilde\eta_5=3\rho_2+1.
	\end{align}
\end{theorem}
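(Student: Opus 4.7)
The plan is to apply the SHS machinery from \cite{8469047}, as laid out in Section~\ref{Introduction to the SHS Technique} of this paper, to Policy 2. First I would identify the discrete states of the underlying continuous-time Markov chain describing the system occupancy. Under Policy 2, the system holds at most one packet per source, so the natural state set is $\mathcal{Q}=\{0,1,2,3,4\}$, corresponding respectively to: empty; only a source-1 packet in service; only a source-2 packet in service; source-1 in service with source-2 waiting; source-2 in service with source-1 waiting. I would then write the transition rates (Poisson arrivals at rates $\lambda_1,\lambda_2$, exponential service completions at rate $\mu$, plus self-preemption transitions from every state containing a source-$c$ packet back to itself on each fresh source-$c$ arrival) and solve the stationary flow-balance equations to obtain the probabilities $\pi_q$.

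Next, I would fix an age vector $\mathbf{x}=(x_0,x_1,x_2)$ whose coordinates track, respectively, the current AoI of source 1 at the sink, the age of a source-1 packet in service when one is present, and the age of a source-1 packet waiting in the queue when one is present. Each transition $l$ in the Markov chain is then associated with a binary reset map $A_l\in\{0,1\}^{3\times 3}$: for instance, the completion transition $3\to 2$ assigns $x_0\leftarrow x_1$ (successful delivery of source 1); a source-1 self-preemption in state $1$ or $3$ resets $x_1\leftarrow 0$; the source-1 arrival sending state $2$ to state $4$ sets $x_2\leftarrow 0$; and the completion transition $4\to 1$ assigns $x_1\leftarrow x_2$ as the queued source-1 packet enters service. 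The growth-rate vector $\mathbf{b}_q$ records which coordinates grow linearly in each state, namely $\mathbf{b}_0=\mathbf{b}_2=(1,0,0)$, $\mathbf{b}_1=\mathbf{b}_3=(1,1,0)$, and $\mathbf{b}_4=(1,0,1)$. Together these data form the complete SHS description of the Markov reward process.

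The core step is to solve the stationarity equations of \cite{8469047} for the age-weighted state occupancy $\bar{\mathbf{v}}_q=\lim_{t\to\infty}\mathbb{E}[\mathbf{x}(t)\mathbf{1}\{q(t)=q\}]$, namely
\begin{equation}\nonumber
\bar{\mathbf{v}}_q\sum_{l\in\mathcal{L}_q}\lambda^{(l)}=\mathbf{b}_q\pi_q+\sum_{l\in\mathcal{L}'_q}\lambda^{(l)}\bar{\mathbf{v}}_{q_l}A_l,
\end{equation}
where $\mathcal{L}_q$ and $\mathcal{L}'_q$ are the outgoing and incoming transition sets at state $q$. This yields a linear system that partially decouples: in states $0$ and $2$ the components $x_1,x_2$ are inactive, in states $1$ and $3$ only $x_0,x_1$ are active, and in state $4$ only $x_0,x_2$ are active, reducing the effective number of unknowns considerably. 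Solving the reduced system gives $\bar{v}_{q,0}$ for every $q$, and the average AoI of source 1 follows from $\Delta_1=\sum_{q\in\mathcal{Q}}\bar{v}_{q,0}$.

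I expect the principal difficulty to lie not in setting up the SHS equations but in the algebraic simplification of the resulting rational function of $\rho_1$ and $\rho_2$ into the compact closed form stated in the theorem. The unsimplified intermediate expressions involve polynomials whose factorization is not transparent; after placing everything over a common denominator, one must recognize the denominator factor $\mu\rho_1(1+\rho_1)^2\bigl(\rho_1^2(2\rho_2+1)+(\rho_2+1)^2(2\rho_1+1)\bigr)$ and then regroup the numerator as $(\rho_2+1)^2+\sum_{k=1}^{5}\rho_1^k\tilde\eta_k$ with the stated coefficients $\tilde\eta_k$. This step is essentially deterministic bookkeeping of the coefficients of each monomial $\rho_1^j\rho_2^k$, tedious but mechanical, and naturally benefits from a symbolic-computation cross-check.
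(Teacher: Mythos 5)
Your proposal follows essentially the same route as the paper's proof: the identical five-state SHS Markov chain, the same reset maps for arrivals, self-preemptions, and service completions, the stationary equations \eqref{eqrt01}--\eqref{asleq}, and the final summation $\Delta_1=\sum_{q}\bar{v}_{q0}$ of \eqref{AOIANAL}. The only deviation is your choice $b_{q,j}=0$ for age components that are irrelevant in state $q$, whereas the paper uses $\mathbf{b}_q=\mathbf{1}$ together with identity resets on those components; since no reset map ever reads an irrelevant component into $x_0$ or $x_1$ here, both conventions yield the same $\bar{v}_{q0}$ and hence the same closed form.
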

\begin{proof}
	The proof of Theorem \ref{theo2} appears in  Section \ref{Average AoI Under Policy 2} of this paper.
\end{proof}

\begin{theorem}\label{theo3}
	The average AoI of source 1 under Policy 3 is given as
	\begin{align}\nonumber
	\Delta_{1}=\dfrac{{(\rho_2+1)}^3+\sum_{k=1}^4 \rho_1^k\hat\eta_k}{\mu \rho_1\left(1+\rho_1\right)\left(1+\rho_2\right)\big(\rho_1^2(2\rho_2+1)+(\rho_2+1)^2(2\rho_1+1)\big)},
	\end{align}
	where
	\begin{align} \nonumber
	&\hat\eta_1=5\rho_2^3+14\rho_2^2+13\rho_2+4,~~~~~~~~~~~~
	\hat\eta_2=10\rho_2^3+28\rho_2^2+25\rho_2+7,\\\nonumber&
	\hat\eta_3=5\rho_2^3+22\rho_2^2+23\rho_2+6,~~~~~~~~~~~~
	\hat\eta_4=5\rho_2^2+8\rho_2+2.\\\nonumber&
	\end{align}
\end{theorem}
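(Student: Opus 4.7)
The plan is to apply the stochastic hybrid systems (SHS) framework, mirroring the structure used for Theorems \ref{theo1} and \ref{theo2}. First I would model the system as a continuous-time Markov chain whose discrete states capture the occupancy under Policy 3. Because the system holds at most two packets (one per source) and in-service preemption is forbidden, five states suffice: $q=0$ (empty), $q=1$ (source 1 alone in service), $q=2$ (source 2 alone in service), $q=3$ (source 1 in service with source 2 waiting), and $q=4$ (source 2 in service with source 1 waiting). The transition rates follow directly from the policy: arrivals of source $c$ in a state where source $c$ is already in service are blocked and cleared; arrivals into the waiting slot overwrite a possible older packet of the same source; service completions at rate $\mu$ promote the queued packet (if any) to the server. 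With only five states, the stationary distribution $\boldsymbol\pi$ is obtained in closed form from the balance equations.

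Next I would define the continuous state vector $\mathbf{x}(t)=[x_0(t),x_1(t),x_2(t)]$, where $x_0(t)=\Delta_1(t)$ is the AoI of source 1 at the sink, $x_1(t)$ tracks the age of a source 1 packet currently in service (when one is present), and $x_2(t)$ tracks the age of a source 1 packet currently in the queue (when one is present). For every transition in the previous step I would write the linear reset map $\mathbf{x}\mapsto\mathbf{x}A_l$ encoding the following bookkeeping rules: (i) $x_0$ is updated only when a source 1 packet completes service, in which case it inherits the value of $x_1$; (ii) $x_1$ and $x_2$ are reset to zero upon the arrival of the corresponding fresh source 1 packet; and (iii) the queue-to-server promotion copies $x_2$ into the $x_1$ slot when the ongoing source 2 service ends in state 4.

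Having specified the SHS, I would apply the standard age-analysis recipe already used in Sections \ref{FCFS Prioritized Packet Management Policy} and \ref{Average AoI Under Policy 2} to obtain a linear system for the quantities $\bar{\mathbf{v}}_q=\lim_{t\to\infty}\mathbb{E}[\mathbf{x}(t)\mathbf{1}\{q(t)=q\}]$, of size $5\times 3=15$, and solve it symbolically. The average AoI would then follow from $\Delta_1=\sum_{q=0}^{4}\bar{v}_{q,0}$ after substitution of $\boldsymbol\pi$ and algebraic simplification.

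The main obstacle is computational rather than conceptual. The subtle point distinguishing Policy 3 from Policy 2 is that blocking in service and preemption in waiting must be modeled simultaneously, which changes the transition graph and several reset maps relative to the Policy 2 derivation. Once the transitions are correct, the difficulty lies in simplifying the rational-function solution of the 15-unknown system into the compact polynomial form claimed in the theorem. I would therefore rely on symbolic algebra for this bookkeeping, and sanity-check the answer by taking two limits: as $\rho_2\to 0$ the expression must collapse to the single-source M/M/1/2 blocking result, and swapping $\rho_1\leftrightarrow\rho_2$ must yield $\Delta_2$, so that the sum $\Delta_1+\Delta_2$ agrees with the simulation values reported in Section \ref{Numerical Results}.
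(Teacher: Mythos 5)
Your proposal follows essentially the same route as the paper: the identical five-state SHS Markov chain (reused from Policy 2, with only the self-arrival transitions $l=3$ and $l=6$ changed from preemption to blocking), the same three-component age vector, the same stationary distribution \eqref{proeqq0}, and the same linear system \eqref{asleq} whose solution is summed via $\Delta_1=\sum_{q}\bar{v}_{q0}$. One small correction to your sanity check: as $\rho_2\to 0$ the waiting slot is reserved for source 2 and thus unusable, so the expression collapses to the single-source M/M/1/1 blocking result $\tfrac{2\rho_1^2+2\rho_1+1}{\mu\rho_1(1+\rho_1)}$ rather than an M/M/1/2 queue.
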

\begin{proof}
	The proof of Theorem \ref{theo3} appears in  Section \ref{Average AoI Under Policy 3} of this paper.
\end{proof}

\section{ A Brief Introduction to the SHS Technique}\label{Introduction to the SHS Technique}	
In the following, we briefly present the main idea  behind the SHS technique which is the key tool for our AoI analysis in Section \ref{AoI Analysis Using the SHS Technique}. We refer the readers to \cite{8469047} for more details.

The SHS technique models a queueing system  through the states $(q(t), \bold{x}(t))$, where  ${q(t)\in \mathcal{Q}=\{0,1,\ldots,m\}}$ is a continuous-time finite-state Markov
chain that  
describes the occupancy of the system and ${\bold{x}(t)=[x_0(t)~x_1(t)\cdots x_n(t)]\in \mathbb{R}^{1\times(n+1)}}$ is a continuous
process that describes the evolution of age-related processes at the sink. Following the approach in  \cite{8469047}, we label the source of interest as source 1 and  employ the continuous process $ \bold{x}(t) $ to track the age of source 1 status updates at the sink.

The Markov chain $q(t)$ can be presented as a graph $(\mathcal{Q},\mathcal{L})$ where 
each discrete state $q(t)\in \mathcal{Q}$ is a node of the chain and a (directed) link $ l\in\mathcal{L} $  from node $ q_l $ to node $q'_{l}$ indicates a transition from state $ {q_l \in \mathcal{Q}}$ to state ${q'_{l}\in \mathcal{Q}}$.

A transition occurs  when a packet arrives or departs in the system. Since the time elapsed between departures and arrivals is exponentially distributed according to the M/M/1 queueing model, transition $l\in\mathcal{L}$ from state $ q_l $ to state $q'_{l}$ occurs with the  exponential rate $\lambda^{(l)}\delta_{q_l,q(t)}$\footnote{In our system model, $ \lambda^{(l)} $ can represent three quantities: arrival rate of source 1 ($ \lambda_1 $), arrival rate of source 2 ($ \lambda_2 $), and the service rate ($ \mu $). },
 where the Kronecker delta function $\delta_{q_l,q(t)}$ ensures that the transition $ l $ occurs only when the discrete
state $ q(t) $ is equal to $ q_l $.  When a transition $l$ occurs, the  discrete state $ q_l $ changes   to state $q'_{l}$, and the continuous state $\bold{x}$ is reset to $\bold{x}'$ according to a binary transition reset map matrix ${\bold{A}_l}\in\mathbb{B}^{(n+1)\times(n+1)}$ as ${\bold{x}'=\bold{x}\bold{A}_l}$. In addition, at each state ${q(t)=q\in \mathcal{Q}}$, the continuous state $\bold{x}$ evolves as a piece-wise linear function through the 
differential equation ${\dot{\bold{x}}(t)\triangleq\dfrac{\partial\bold{x}(t)}{\partial  t}=\bold{b}_q}$, where
$\bold{b}_q=[b_{q,0}~b_{q,1}\cdots b_{q,n}]\in\mathbb{B}^{1\times(n+1)}$ is a binary vector with  elements $b_{q,j}\in \{0,1\}, \forall j\in\{0,\ldots,n\},q \in\mathcal{Q}$.
If the age process $x_j(t)$ increases at a unit rate, we have ${b}_{q,j}=1$; otherwise,   ${b}_{q,j}=0$.  

Note that unlike in a typical continuous-time Markov
chain, a transition
 from a  state to itself  (i.e., a self-transition) is possible in  $q(t)\in \mathcal{Q}$. In the case of a self-transition, a reset of the continuous state $\bold{x}$ takes place, but the discrete state remains the same. In addition, for a given pair of states $s,s'\in \mathcal{Q}$, there  may be multiple
transitions $ l $ and $ l' $ so that the discrete state changes from $ s $ to
$ s' $ but the transition reset maps $\bold{A}_l $ and $ \bold{A}_{l'}$ are different (for more details, see  \cite[Section III]{8469047}). 

To calculate the average AoI using the SHS technique, the state probabilities of the Markov chain and the correlation vector between the discrete state $q(t)$ and the continuous state $\bold{x}(t)$ need to be calculated. Let $\pi_q(t)$ denote the probability of being in state $q$ of the Markov chain and $\bold{v}_q(t)=[{v}_{q0}(t)\cdots{v}_{qn}(t)]\in\mathbb{R}^{1\times(n+1)}$ denote the correlation vector between the discrete state $q(t)$ and the continuous state $\bold{x}(t)$. Accordingly, we have 
\begin{equation}
\pi_q(t)=\mathrm{Pr}(q(t)=q)=\mathbb{E}[\delta_{q,q(t)}], \,\,\,\forall q\in\mathcal{Q},
\end{equation}
\begin{equation}
\bold{v}_q(t)=[{v}_{q0}(t)\cdots{v}_{qn}(t)]=\mathbb{E}[\bold{x}(t)\delta_{q,q(t)}],\,\,\,\forall q\in\mathcal{Q}.
\end{equation}

Let $\mathcal{L}'_q$ denote the set of incoming transitions and  $\mathcal{L}_q$ denote the set of outgoing transitions for state $q$, defined as 
\begin{align}\nonumber
&\mathcal{L}'_q=\{l\in\mathcal{L}:q'_{l}=q\},\,\,\,\forall q\in\mathcal{Q},\\\nonumber&\mathcal{L}_q=\{l\in\mathcal{L}:q_{l}=q\},\,\,\,\forall q\in\mathcal{Q}.
\end{align}
Following the ergodicity assumption of the Markov chain $q(t)$ in the AoI analysis \cite{8469047,9103131,9007478}, the state
probability vector $\boldsymbol{\pi}(t)=[\pi_0(t) \cdots \pi_m(t)]$ converges uniquely 
to the  stationary vector $\bar{\boldsymbol{\pi}}=[\bar{\pi}_0 \cdots \bar{\pi}_m]$ satisfying \cite{8469047}
\begin{align}\label{eqrt01}
&\bar{{\pi}}_q\textstyle\sum_{l\in\mathcal{L}_q}\lambda^{(l)}=\textstyle\sum_{l\in\mathcal{L}'_q}\lambda^{(l)}\bar{{\pi}}_{q_l}, \,\,\,\forall q\in\mathcal{Q},\\&\label{erwq}
\textstyle\sum_{q\in\mathcal{Q}}\bar{{\pi}}_q=1.
\end{align}
Further,  it has been shown in \cite[Theorem 4]{8469047} that under the ergodicity assumption of the Markov chain $q(t)$ with stationary distribution $\bar{\boldsymbol{\pi}}\succ0$, the existence of a nonnegative solution ${\bar{\bold{v}}_q=[\bar{v}_{q0} \cdots \bar{v}_{qn} ], \forall q\in \mathcal{Q}},$ for the following system of linear equations
	\begin{align}\label{asleq}	
	\bar{\bold{v}}_q\textstyle\sum_{l\in\mathcal{L}_q}\lambda^{(l)}=\bold{b}_q\bar{{\pi}}_q+\textstyle\sum_{l\in\mathcal{L}'_q}\lambda^{(l)}\bar{\bold{v}}_{q_l}\bold{A}_l, \,\,\,\forall q\in\mathcal{Q},
	\end{align}
	implies that the correlation vector $\bold{v}_q(t)$ converges to  ${\bar{\bold{v}}_q=[\bar{v}_{q0} \cdots \bar{v}_{qn} ], \forall q\in \mathcal{Q}}$  as $t\rightarrow\infty$. 
	Finally, the average AoI of source 1 is calculated by \cite[Theorem 4]{8469047}
	\begin{align}\label{AOIANAL}	
	\Delta_1=\textstyle\sum_{q\in\mathcal{Q}}\bar{v}_{q0}.
	\end{align}

As \eqref{AOIANAL} implies, the main challenge in calculating the average AoI of a source using the SHS technique reduces to deriving the first elements of each correlation vector $\bar{\mathbf{v}}_{q}$, i.e., 
$\bar{v}_{q0}$, ${\forall{q}\in\mathcal{Q}}$. Note that these quantities are, in general, different for each particular  queueing model.

\section{ Average AoI Analysis Using the SHS Technique}\label{AoI Analysis Using the SHS Technique}	
 In this section, we use  the SHS technique to calculate the average AoI in \eqref{AoIeq} of each source under the considered packet management policies described in Section \ref{II-A}. Recall from \eqref{AOIANAL} that the characterization of the average AoI in each of our queueing setup is accomplished by deriving the quantities $\bar{v}_{q0}$, ${\forall{q}\in\mathcal{Q}}$. The next three sections are devoted to elaborate derivations of these quantities.
\subsection{Average AoI under Policy 1}\label{FCFS Prioritized Packet Management Policy}

In Policy 1, the state space of the Markov chain is $\mathcal{Q}=\{0,1,\ldots,5\}$, with each state presented in Table \ref{table-1}. For example, ${q=0}$ indicates that the server is idle which is shown by I; ${q=1}$ indicates that a packet is under service, i.e., the queue is empty and the server is busy which is shown by B; and  ${q=5}$ indicates that server is busy, the first packet in the queue (i.e., the packet that is at the head of the queue as depicted in Fig. \ref{AoIs}(a)) is a source 2  packet, and the second packet in the queue is a source 1  packet.

The continuous process is ${\bold{x}(t)=[x_0(t)~x_1(t)~x_2(t)~x_3(t)]}$, where $x_0(t)$ is the current AoI of source 1 at time instant $t$, $\Delta_1(t)$;  $ x_1(t)  $ encodes  what $\Delta_1(t)$ would become if the packet that is under service is delivered to the sink at time instant $t$; $ x_2(t)  $ encodes  what $\Delta_1(t)$ would become if the first packet in the queue is delivered to the sink at time instant $t$; $ x_3(t)  $ encodes  what $\Delta_1(t)$ would become if the second packet in the queue is delivered to the sink at time instant $t$.

Recall that our goal is to  find $\bar{v}_{q0}, \forall q\in\mathcal{Q} $, to calculate the average AoI of source 1 in \eqref{AOIANAL}. To this end, we need to solve the system of linear equations \eqref{asleq} with variables $\bar{\bold{v}}_q, \forall{q}\in\mathcal{Q}$. To form the system of linear equations \eqref{asleq} for each state $\forall{q}\in\mathcal{Q}$, we need to determine  $\bold{b}_q$, $\bar{{\pi}}_q$, and  $\bar{\bold{v}}_{q_l}\bold{A}_l$ for each incoming transition ${l\in\mathcal{L}'_q}$. Next, we derive these for Policy 1.

\subsubsection{Determining the value of $\bar{\bold{v}}_{q_l}\bold{A}_l$ for incoming transitions for each state $q\in\mathcal{Q}$}
The Markov chain for the discrete state $q(t)$ with the incoming and outgoing transitions for each state $q\in\mathcal{Q}$ is shown in Fig. \ref{Chain1}. The transitions between the discrete states ${{q_l \rightarrow q'_l}, \,\,\forall l\in \mathcal{L}}$, and their effects on the continuous state $\bold{x}(t)$ 
are summarized in Table \ref{table-2}. In the following, we explain the transitions presented in Table \ref{table-2}:

\begin{figure}
	\centering
	\includegraphics[scale=.7]{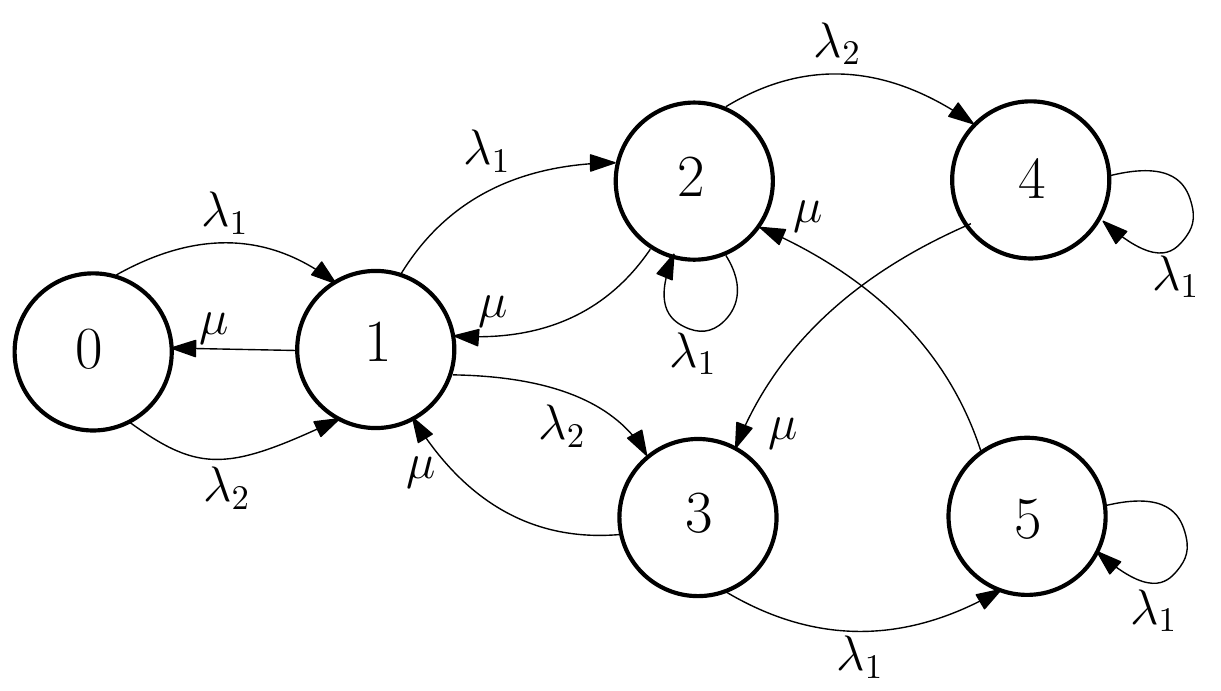}
	\caption{The SHS Markov chain for Policy 1.}
		\vspace{-5mm}
	\label{Chain1}
\end{figure}

\begin{table}
	\centering\small
	\caption{SHS Markov chain states for Policy 1}
	\label{table-1}
	\begin{tabular}{ |c|m{3.5cm}|m{3.5cm}|c|}
		\hline
		State  & Source index of the second packet in the queue  &Source index of the first packet in the queue & Server \\			
		\hline			
		0&-& -&I	\\		
		\hline
		1&-&-&B\\
		\hline	
		2&-&1&B\\
		\hline	
		3&-&2&B\\
		\hline	
		4&2&1&B\\
		\hline	
		5&1&2&B\\
		\hline				
	\end{tabular}
\end{table}

\begin{table}\small
	\centering
	\caption{Table of transitions for the Markov chain of Policy 1 in Fig. \ref{Chain1}}
	\label{table-2}
	\begin{tabular}{ |l|l|c|c|c|c|}
		\hline
		\textit{l}  & $q_l \rightarrow q'_l $&$\lambda^{(l)}$& $\bold{A}_l$&$\bold{x}\bold{A}_l$&$\bold{v}_{q_l}\bold{A}_l$ \\			
		\hline			
		1&$0 \rightarrow  1$& $\lambda_1$&$\left[x_0~ 0~ x_2 ~x_3\right]$&$\tiny\begin{bmatrix}
			1 & 0 & 0&0\\
			0 & 0 & 0&0\\
			0 & 0 & 1&0\\
			0 & 0 & 0&1
		\end{bmatrix}$&$\left[v_{00}~ 0~ v_{02} ~v_{03}\right]$	\\		
		\hline
	2&$0 \rightarrow  1$&$\lambda_2$&$\left[x_0~ x_0~ x_2~x_3\right]$&$\tiny\begin{bmatrix}
	1 & 1 & 0&0\\
	0 & 0 & 0&0\\
	0 & 0 & 1&0\\
	0 & 0 & 0&1
	\end{bmatrix}$&$\left[v_{00}~ v_{00}~ v_{02} ~v_{03}\right]$\\
		\hline	
			3&$1 \rightarrow  0$&$\mu$&$\left[x_1~ x_1~ x_2 ~x_3\right]$&$\tiny\begin{bmatrix}
		0 & 0 & 0&0\\
		1 & 1 & 0&0\\
		0 & 0 & 1&0\\
		0 & 0 & 0&1
		\end{bmatrix}$&$\left[v_{11}~ v_{11}~ v_{12} ~v_{13}\right]$\\		
		\hline	
		4&$1 \rightarrow  2$&$\lambda_1$&$\left[x_0~ x_1~ 0 ~x_3\right]$&$\tiny\begin{bmatrix}
		1 & 0 & 0&0\\
		0 & 1 & 0&0\\
		0 & 0 & 0&0\\
		0 & 0 & 0&1
		\end{bmatrix}$&$\left[v_{10}~ v_{11}~ 0 ~v_{13}\right]$\\
		\hline	
		5&$1 \rightarrow 3 $&$\lambda_2$&$\left[x_0~ x_1~ x_1 ~x_3\right]$&$\tiny\begin{bmatrix}
    	1 & 0 & 0&0\\
    	0 & 1 & 1&0\\
    	0 & 0 & 0&0\\
    	0 & 0 & 0&1
    	\end{bmatrix}$&$\left[v_{10}~ v_{11}~ v_{11} ~v_{13}\right]$\\
		\hline			
		6&$2 \rightarrow  1$&$\mu$&$\left[x_1~ x_2~ x_2 ~x_3\right]$&$\tiny\begin{bmatrix}
		0 & 0 & 0&0\\
		1 & 0 & 0&0\\
		0 & 1 & 1&0\\
		0 & 0 & 0&1
		\end{bmatrix}$&$\left[v_{21}~ v_{22}~ v_{22} ~v_{23}\right]$\\	
		\hline	
		7&$3 \rightarrow  1$&$\mu$&$\left[x_1~ x_1~ x_2 ~x_3\right]$&$\tiny\begin{bmatrix}
		0 & 0 & 0&0\\
		1 & 1 & 0&0\\
		0 & 0 & 1&0\\
		0 & 0 & 0&1
		\end{bmatrix}$&$\left[v_{31}~ v_{31}~ v_{32} ~v_{33}\right]$\\
		\hline	
		8&$2 \rightarrow  2$&$\lambda_1$&$\left[x_0~ x_1~ 0 ~x_3\right]$&$\tiny\begin{bmatrix}
		1 & 0 & 0&0\\
		0 & 1 & 0&0\\
		0 & 0 & 0&0\\
		0 & 0 & 0&1
		\end{bmatrix}$&$\left[v_{20}~ v_{21}~ 0 ~v_{23}\right]$\\
		\hline	
		9&$2 \rightarrow 4 $&$\lambda_2$&$\left[x_0~ x_1~ x_2 ~x_2\right]$&$\tiny\begin{bmatrix}
		1 & 0 & 0&0\\
		0 & 1 & 0&0\\
		0 & 0 & 1&0\\
		0 & 0 & 0&1
		\end{bmatrix}$&$\left[v_{20}~ v_{21}~ v_{22} ~v_{22}\right]$\\
		\hline	
		10&$3 \rightarrow  5$&$\lambda_1$&$\left[x_0~ x_1~ x_1 ~0\right]$&$\tiny\begin{bmatrix}
		1 & 0 & 0&0\\
		0 & 1 & 1&0\\
		0 & 0 & 0&0\\
		0 & 0 & 0&0
		\end{bmatrix}$&$\left[v_{30}~ v_{31}~ v_{31} ~0\right]$\\
		\hline	
		11&$4\rightarrow  4$&$\lambda_1$&$\left[x_0~ x_1~ 0 ~0\right]$&$\tiny\begin{bmatrix}
		1 & 0 & 0&0\\
		0 & 1 & 0&0\\
		0 & 0 & 0&0\\
		0 & 0 & 0&0
		\end{bmatrix}$&$\left[v_{40}~ v_{41}~ 0 ~0\right]$\\
		\hline	
		12&$5 \rightarrow  5$&$\lambda_1$&$\left[x_0~ x_1~ x_1 ~0\right]$&$\tiny\begin{bmatrix}
		1 & 0 & 0&0\\
		0 & 1 & 1&0\\
		0 & 0 & 0&0\\
		0 & 0 & 0&0
		\end{bmatrix}$&$\left[v_{50}~ v_{51}~ v_{51} ~0\right]$\\
		\hline	
		13&$4 \rightarrow  3$&$\mu$&$\left[x_1~ x_2~ x_2 ~x_3\right]$&$\tiny\begin{bmatrix}
		0 & 0 & 0&0\\
		1 & 0 & 0&0\\
		0 & 1 & 1&0\\
		0 & 0 & 0&1
		\end{bmatrix}$&$\left[v_{41}~ v_{42}~ v_{42} ~v_{43}\right]$\\
		\hline	
		14&$5 \rightarrow  2$&$\mu$&$\left[x_1~ x_1~ x_3 ~x_3\right]$&$\tiny\begin{bmatrix}
		0 & 0 & 0&0\\
		1 & 1 & 0&0\\
		0 & 0 & 0&0\\
		0 & 0 & 1&1
		\end{bmatrix}$&$\left[v_{51}~ v_{51}~ v_{53} ~v_{53}\right]$\\
		\hline			
	\end{tabular}
\end{table}

\begin{itemize}
	\item \textit{l=1}: A source 1 packet  arrives at an empty system. With this arrival/transition the AoI of source 1 does not change, i.e., $x'_0=x_0$.
	This is because the arrival of source 1 packet does not yield an age reduction until it is delivered to the sink. Since the arriving source 1 packet is fresh and its age is zero, we have $x'_1=0$. Since with this arrival the queue is still empty, $x_2$ and $x_3$ become irrelevant to the AoI of source 1, and thus, $x'_2=x_2$ and $x'_3=x_3$. Note that if the system moves into a new state where $ x_j $ is irrelevant, we set $ x'_j=x_j, j\in\{1,2,3\} $. 
	 An interpretation of this assignment is that  $ x_j $ has not changed in the transition to  the new state.
	 Finally, we have 
	\begin{align}\label{l1}
	\bold{x}'=[x_0~x_1~x_2~x_3]\bold{A}_1=[x_0~0~x_2~x_3].
	\end{align} 
	According to \eqref{l1}, it can be shown that the binary matrix $\bold{A}_1$ is given by
	\begin{align}\label{l11}
	\bold{A}_1=\begin{bmatrix}
	1 & 0 & 0&0\\
	0 & 0 & 0&0\\
	0 & 0 & 1&0\\
	0 & 0 & 0&1
	\end{bmatrix}.
	\end{align} 
	Then, by using \eqref{l11}, $\bold{v}_{0}\bold{A}_1$ is calculated as
	\begin{align}\label{l12}
	\bold{v}_{0}\bold{A}_1\!\!=\![v_{00}~ v_{01}~ v_{02} ~v_{03}]\bold{A}_1
	=\!\!\left[v_{00}~ 0~ v_{02} ~v_{03}\right].
	\end{align} 
	It can be seen from \eqref{l1}-\eqref{l12} that when we have $\bold{x}'$ for a transition $l\in\mathcal{L}$, it is easy to calculate $ \bold{v}_{q_l}\bold{A}_l $.
	Thus, for the rest of the transitions, we just explain the calculation of $\bold{x}'$ and present the final expressions of $\bold{A}_l$ and $ \bold{v}_{q_l}\bold{A}_l $.
	\item \textit{l=2}: A  source 2 packet arrives at an empty system.  We have $x'_0=x_0$, because this arrival does not change the AoI at the sink.  Since the arriving packet is a source 2 packet, its delivery does not change the AoI of source 1, thus we have
	$x'_1=x_0$. Moreover, since the queue is empty, $x_2$ and $x_3$ become irrelevant, and we have  $x'_2=x_2$ and $x'_3=x_3$.

	\item \textit{l=3}: A packet is under service and it completes service and is delivered
	to the sink. With this transition, 
	the AoI at the sink is reset to the age of the  packet that just completed service, and thus, $x'_0=x_1$. Since the 
	system enters state $q=0$,
	we have  ${x'_1=x_1}$, ${x'_2=x_2}$, and ${x'_3=x_3}$.

	\item \textit{l=4}: A  packet is under service and a source 1 packet arrives. In this transition, we have $x'_0=x_0$ because there is no departure. The delivery of the packet under service reduces the AoI to $x_1$ and  thus,  $x'_1=x_1$. Since the arriving source 1 packet is fresh and its age is zero we have $x'_2=0$. Since there is only one packet in the queue,  $x_3$ becomes irrelevant, and  we have  $x'_3=x_3$.
	
		\item \textit{l=5}: A  packet is under service and a source 2 packet arrives. In this transition, we have $x'_0=x_0$ because there is no departure. The delivery of the packet under service reduces the AoI to $x_1$ and  thus,  $x'_1=x_1$. Since the arriving source 2 packet, its delivery does not change the AoI of source 1, and thus we have $x'_2=x_1$. Since there is only one packet in the queue,  $x_3$ becomes irrelevant, and  we have  $x'_3=x_3$.

		\item \textit{l=6}: A source 1 packet is in the queue, a packet is under service and it completes service and is delivered
		to the sink.  With this transition, the AoI at the sink is reset to the age of the  packet that just completed service, and thus, $x'_0=x_1$.  Since the source 1 packet in the queue goes to the server, we have $x'_1=x_2$. In addition, since with this departure the queue becomes empty, we have  $x'_2=x_2$ and $x'_3=x_3$.			 
	
		\item \textit{l=7}: A source 2 packet is in the queue, a packet is under service and it completes service and is delivered
	to the sink.  With this transition, the AoI at the sink is reset to the age of the  packet that just completed service, and thus, $x'_0=x_1$.  Since the source 2 packet in the queue goes to the server and its delivery does not change the AoI of source 1, we have $x'_1=x_1$. In addition, since with this departure the queue becomes empty, we have  $x'_2=x_2$ and $x'_3=x_3$.

	\item \textit{l=8}:  A  packet is under service, a source 1 packet is in the queue, and 
	a  source 1 packet arrives. According to Policy 1, the source 1 packet in the queue is replaced by the fresh
	source 1 packet. In this transition, we have $x'_0=x_0$ because there is no departure. The delivery of the packet under service reduces the AoI to $x_1$, and thus,  $x'_1=x_1$. Since the arriving source 1 packet is fresh and its age is zero, we have $x_2=0$. Since there is only one packet in the queue,  
	we have  $x'_3=x_3$.

	\item \textit{l=9}: 
	A packet is under service, a source 1 packet is in the queue, and 
	a  source 2 packet arrives. In this transition,  ${x'_0=x_0}$ because there is no departure. The delivery of the packet under service reduces the AoI to $x_1$, and thus,  $x'_1=x_1$.  The delivery of the first packet in the queue reduces the AoI to $x_2$, and thus,  ${x'_2=x_2}$. Since the second packet in the queue is a source 2 packet, its delivery does not change the AoI of source 1, and thus we have $x'_3=x_2$. 
	
	\item \textit{l=10}: 
	A packet is under service, a source 2 packet is in the queue, and 
	a  source 1 packet arrives. In this transition,  ${x'_0=x_0}$ because there is no departure. The delivery of the packet under service reduces the AoI to $x_1$, and thus,  $x'_1=x_1$.  Since the first packet in the queue is a source 2 packet, its delivery does not change the AoI of source 1, and thus we have
	$x'_1=x_1$. Since the arriving source 1 packet is fresh and its age is zero, we have $x_3=0$.

	\item \textit{l=11}: A packet is under service, the first packet in the queue is a source 1 packet, the second packet in the queue is a source 2 packet,  and a  source 1 packet arrives. According to Policy 1, the source 1 packet in the queue is replaced by the fresh source 1 packet. In this transition, we have $x'_0=x_0$ because there is no departure. The delivery of the packet under service reduces the AoI to $x_1$, thus,  $x'_1=x_1$. Since the arriving source 1 packet is fresh and its age is zero we have $x'_2=0$. Since the second packet in the queue is a source 2 packet, its delivery does not change the AoI of source 1, and thus
	we have  $x'_3=0$.  
	 The reset maps of transition $l=12$ can be derived similarly.

	\item \textit{l=13}: 
	The first packet in the queue is a source 1 packet, the second packet in the queue is a source 2 packet, and the  packet under service  completes service and is delivered to the sink.
	With this transition, the AoI at the sink is reset to the age of the source 1 packet that just completed service, and thus, $x'_0=x_1$.  Since the first packet in the  queue goes to the server, we have $x'_1=x_2$. 
	In addition, since with this departure the queue holds the  source 2 packet and  its delivery does not change the AoI of source 1, we have $x'_2=x_2$  and $x'_3=x_3$. The reset maps of transition $l=14$  can be derived similarly.

\end{itemize}

Having defined the sets of incoming and outgoing transitions, and the value of $\bar{\bold{v}}_{q_l}\bold{A}_l$ for each incoming transition for each state $q\in\mathcal{Q}$, the remaining task is to derive  $\bold{b}_q, \forall  q\in\mathcal{Q}$, and  the stationary probability vector  $\bar{\boldsymbol{\pi}}$. This is carried out next. 
\subsubsection{Calculation of $\bold{b}_q$ and $\bar\pi_q$ for each state $  q\in\mathcal{Q}$}
The evolution of $ \bold{x}(t) $ at each discrete state $q(t)=q$ is determined by the differential equation   $\dot{\bold{x}}=\bold{b}_q$, as described in Section \ref{Introduction to the SHS Technique}. Since as long as the discrete  state $q(t)$ is unchanged, the age of each element $x_j(t), j\in\{0,\ldots,3\},$ increases at a unit rate with time, and thus  we have $\bold{b}_q=\bold{1}$, where $\bold{1}$ is the row vector $[1 \cdots 1]\in\mathbb{R}^{1\times(n+1)}$.

To calculate the stationary probability vector $\bar{\boldsymbol{\pi}}$, we  use \eqref{eqrt01} and \eqref{erwq}.  Using \eqref{eqrt01} and the transitions between the different states presented in Table \ref{table-2}, it can be shown that the stationary probability vector  $\bar{\boldsymbol{\pi}}$ satisfies 
$
\bar{\boldsymbol{\pi}}\bold{D}=\bar{\boldsymbol{\pi}}\bold{Q} 
$
where the diagonal matrix $\bold{D}\in\mathbb{R}^{(n+1)\times(n+1)}$ and matrix $\bold{Q}\in\mathbb{R}^{(n+1)\times(n+1)}$ are given as
\begin{align}\nonumber
\bold{D}=&\,\,\text{diag}[\lambda,\lambda+\mu,\lambda+\mu,\lambda_1+\mu,\lambda_1+\mu,\lambda_1+\mu],\\&\nonumber
\bold{Q}=\left[\begin{array}{cccccc}
\mu   & \lambda    & 0             &0        & 0        &0\\
0     & 0          & \lambda_1     &\lambda_2& \lambda_2&0 \\
0     & \mu        & \lambda_1     &0        & 0        &0\\
0     & \mu        & 0             &0        & 0        & \lambda_1\\
0     & 0          & 0             &\mu      & \lambda_1& 0\\
0     & 0          & \mu           &0        & 0        & \lambda_1\\
\end{array}\right],
\end{align} 
where $\text{diag}[a_1, a_2,\ldots,a_n]$ denotes a diagonal matrix with elements $a_1, a_2,\ldots,a_n$ on its main diagonal. Using the above $
\bar{\boldsymbol{\pi}}\bold{D}=\bar{\boldsymbol{\pi}}\bold{Q} 
$ and  $\textstyle\sum_{q\in\mathcal{Q}}\bar{{\pi}}_q=1$ in \eqref{erwq}, the stationary probabilities are given as 
\begin{align}\label{proeqq}
\bar{\boldsymbol{\pi}}= \dfrac{1}{\rho^2+\rho(2\rho_1\rho_2+1)+1}\left[1~~ \rho~~ \rho_1\rho ~~\rho_2\rho ~~\rho_1\rho_2\rho~~\rho_1\rho_2\rho\right].
\end{align}
\subsubsection{Average AoI Calculation}
By substituting  \eqref{proeqq} into  \eqref{asleq} and solving the corresponding system of linear equations, the values of $\bar{v}_{q0}, \,\,\forall q\in\mathcal{Q}$, are calculated as presented in Appendix \ref{Valuesof v appendix}. Finally,  substituting the values of $\bar{v}_{q0}, \,\,\forall q\in\mathcal{Q}$, 
into \eqref{AOIANAL}  results in the  average AoI of source 1 under Policy 1, given in Theorem \ref{theo1}.
Note that the expression is \emph{exact}; it characterizes the average AoI in the considered queueing model in \emph{closed form}.

\subsection{Average AoI under Policy 2}\label{Average AoI Under Policy 2}
Recall from Section~\ref{System Model} that the main difference of Policy 2 compared to Policy 1 treated above is that \emph{the system} can contain only two packets, one packet of source 1 and one packet of source 2. Accordingly  for Policy 2, the state space of the Markov chain is $\mathcal{Q}=\{0,1,2,3,4\}$, where
${q=0}$ indicates that the server is idle, i.e., the system is empty; ${q=1}$ indicates that a source 1  packet is under service and the queue is empty; ${q=2}$ indicates that a source 2  packet is under service and the queue is empty; ${q=3}$ indicates that a source 1  packet is under service, and a source 2 packet is in the queue; and  ${q=4}$ indicates that a source 2  packet is under service, and a source 1 packet is in the queue.

The continuous process is ${\bold{x}(t)=[x_0(t)~x_1(t)~x_2(t)]}$, where $x_0(t)$ is the current AoI of source 1 at time instant $t$, $\Delta_1(t)$;  $ x_1(t)  $ encodes  what $\Delta_1(t)$ would become if the packet that is under service is delivered to the sink at time instant $t$; $ x_2(t)  $ encodes  what $\Delta_1(t)$ would become if the  packet in the queue is delivered to the sink at time instant $t$.  Next, we will determine the required quantities  to form the system of linear equations in \eqref{asleq} under Policy 2. 
\subsubsection{Determining the value of $\bar{\bold{v}}_{q_l}\bold{A}_l$ for incoming transitions for each state $q\in\mathcal{Q}$}
The Markov chain for the discrete state $q(t)$ is shown in Fig. \ref{Chain3}.  The transitions between the discrete states ${{q_l \rightarrow q'_l}, \,\,\forall l\in \mathcal{L}}$, and their effects on the continuous state $\bold{x}(t)$ are summarized in Table \ref{table-4}. In the following, we explain the transitions presented in Table \ref{table-4}:

\begin{figure}
	\centering
	\includegraphics[scale=.7]{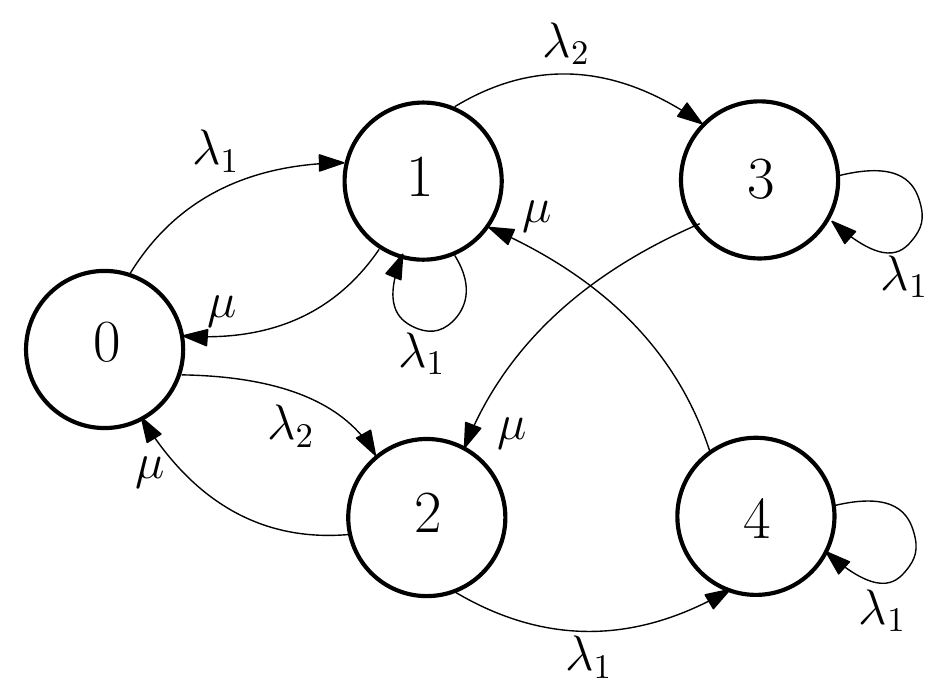}
	\caption{The SHS Markov chain for Policy 2.}
	\vspace{-5mm}
	\label{Chain3}
\end{figure} 

\begin{table}
	\centering\small
	\caption{Table of transitions for the Markov chain of Policy 2 in Fig. \ref{Chain3}}
	\label{table-4}
	\begin{tabular}{ |l|l|c|c|c|c|}
		\hline
		\textit{l}  & $q_l \rightarrow q'_l $&$\lambda^{(l)}$& $\bold{x}\bold{A}_l$&$\bold{A}_l$&$\bold{v}_{q_l}\bold{A}_l$ \\			
		\hline			
		1&$0 \rightarrow  1$& $\lambda_1$&$\left[x_0 ~ 0 ~ x_2 \right]$&$\tiny\begin{bmatrix}
		1 & 0 & 0\\
		0 & 0 & 0\\
		0 & 0 & 1
		\end{bmatrix}$&$\left[v_{00}~ 0~ v_{02} \right]$	\\		
		\hline
		2&$0 \rightarrow  2$&$\lambda_2$&$\left[x_0~ x_0~ x_2 \right]$&$\tiny\begin{bmatrix}
		1 & 1 & 0\\
		0 & 0 & 0\\
		0 & 0 & 1
		\end{bmatrix}$&$\left[v_{00}~ v_{00}~ v_{02} \right]$\\
		\hline	
		3&$1 \rightarrow  1$&$\lambda_1$&$\left[x_0~ 0~ x_2 \right]$&$\tiny\begin{bmatrix}
		1 & 0 & 0\\
		0 & 0 & 0\\
		0 & 0 & 1
		\end{bmatrix}$&$\left[v_{10}~ 0~ v_{12} \right]$\\
		\hline	
		4&$1 \rightarrow  3$&$\lambda_2$&$\left[x_0~ x_1~ x_1 \right]$&$\tiny\begin{bmatrix}
		1 & 0 & 0\\
		0 & 1 & 1\\
		0 & 0 & 0
		\end{bmatrix}$&$\left[v_{10}~ v_{11}~ v_{11} \right]$\\
		\hline	
		5&$2 \rightarrow 4 $&$\lambda_1$&$\left[x_0~ x_0~ 0 \right]$&$\tiny\begin{bmatrix}
		1 & 1 & 0\\
		0 & 0 & 0\\
		0 & 0 & 0
		\end{bmatrix}$&$\left[v_{20}~ v_{20}~ 0 \right]$\\
		\hline	
		6&$3 \rightarrow  3$&$\lambda_1$&$\left[x_0~ 0~ 0 \right]$&$\tiny\begin{bmatrix}
		1 & 0 & 0\\
		0 & 0 & 0\\
		0 & 0 & 0
		\end{bmatrix}$&$\left[v_{30}~ 0~ 0 \right]$\\
		\hline	
		7&$4 \rightarrow 4 $&$\lambda_1$&$\left[x_0~ x_0~ 0 \right]$&$\tiny\begin{bmatrix}
		1 & 1 & 0\\
		0 & 0 & 0\\
		0 & 0 & 0
		\end{bmatrix}$&$\left[v_{40}~ v_{40}~ 0 \right]$\\
		\hline	
		8&$1 \rightarrow  0$&$\mu$&$\left[x_1~ x_1~ x_2 \right]$&$\tiny\begin{bmatrix}
		0 & 0 & 0\\
		1 & 1 & 0\\
		0 & 0 & 1
		\end{bmatrix}$&$\left[v_{11}~ v_{11}~ v_{12} \right]$\\
		\hline	
		9&$2 \rightarrow  0$&$\mu$&$\left[x_0~ x_1~ x_2 \right]$&$\tiny\begin{bmatrix}
		1 & 0 & 0\\
		0 & 1 & 0\\
		0 & 0 & 1
		\end{bmatrix}$&$\left[v_{20}~ v_{21}~ v_{22} \right]$\\
		\hline	
		10&$3 \rightarrow  2$&$\mu$&$\left[x_1~ x_1~ x_2 \right]$&$\tiny\begin{bmatrix}
		0 & 0 & 0\\
		1 & 1 & 0\\
		0 & 0 & 1
		\end{bmatrix}$&$\left[v_{31}~ v_{31}~ v_{32} \right]$\\
		\hline	
		11&$4 \rightarrow  1$&$\mu$&$\left[x_0~ x_2~ x_2 \right]$&$\tiny\begin{bmatrix}
		1 & 0 & 0\\
		0 & 0 & 0\\
		0 & 1 & 1
		\end{bmatrix}$&$\left[v_{40}~ v_{42}~ v_{42} \right]$\\
		\hline			
	\end{tabular}
\end{table}

\begin{itemize}
	\item \textit{l=1}: A source 1 packet  arrives at an empty system. With this transition we have  $x'_0=x_0$ because there is no departure.  Since with this arrival the queue is still empty, $x_2$ becomes irrelevant to the AoI of source 1, and thus, $x'_2=x_2$.

	\item \textit{l=2}: A  source 2 packet arrives at an empty system.  We have $x'_0=x_0$, because this arrival does not change the AoI at the sink.  Since the arriving packet is a source 2 packet, its delivery does not change the AoI of source 1, and thus  we have
	$x'_1=x_0$. Moreover, since the queue is empty, $x_2$ becomes irrelevant, and thus, we have  $x'_2=x_2$.

	\item \textit{l=3}: A source 1 packet is under service and a source 1 packet arrives. According to the self-preemptive service of Policy 2, the source 1 packet that is under service is preempted by the arriving source 1 packet.
	In this transition, we have $x'_0=x_0$ because there is no departure. 
	Since the arrived source 1 packet that entered the server through the preemption is fresh and its age is zero, we have $x'_1=0$. Since the queue is empty, $x_2$ becomes irrelevant, and thus, we have  $x'_2=x_2$.

	\item \textit{l=4}: A source 1 packet is under service and a source 2 packet arrives. In this transition, we have $x'_0=x_0$ because there is no departure. The delivery of the packet under service reduces the AoI to $x_1$, and thus,  we have $x'_1=x_1$. Since the packet in the queue is a source 2 packet, its delivery does not change the AoI of source 1, and  thus
	we have $x'_2=x_1$.	 
	The reset map of transition $l=6$ can be derived similarly.

	\item \textit{l=5}: A source 2 packet is under service and a source 1 packet arrives. In this transition, we have $x'_0=x_0$ because there is no departure. Since the packet under service is a source 2 packet, its delivery does not change the AoI of source 1, and thus we have $x'_1=x_0$. Since the arriving source 1 packet is fresh and its age is zero, we have $x_2=0$.

	\item \textit{l=6}: A source 1 packet is under service, the  packet in the queue is a source 2 packet,  and a  source 1 packet arrives. According to the self-preemptive policy, the source 1 packet that is under service is preempted by the arriving source 1 packet.
	In this transition, we have $x'_0=x_0$ because there is no departure. 	Since the arrived source 1 packet that entered the server through the preemption is fresh and its age is zero, we have $x'_1=0$.
	Since the  packet in the queue is a source 2 packet, its delivery does not change the AoI of source 1, and thus
	we have  $x'_2=0$.   
	The reset maps of transition $l=7$ can be derived similarly.

	\item \textit{l=8}: A source 1 packet is under service and it completes service and is delivered
	to the sink. With this transition, 
	the AoI at the sink is reset to the age of the source 1 packet that just completed service, and thus, $x'_0=x_1$. Since the 
	system enters state $q=0$,
	we have  ${x'_1=x_1}$, and ${x'_2=x_2}$. The reset map of transition $l=9$ can be derived similarly.

	\item \textit{l=10}: 
	The  packet in the queue is a source 2 packet and the source 1 packet in the server completes service and is delivered to the sink.
	With this transition, the AoI at the sink is reset to the age of the source 1 packet that just completed service, i.e., $x'_0=x_1$.  Since the packet that goes to the server is a source 2 packet, its delivery does not change the AoI of source 1, and  thus   we have $x'_1=x_1$. In addition, since with this transition the queue becomes empty, we have $x'_2=x_2$.	
	The reset map of transition $l=11$ can be derived similarly.

\end{itemize}
\subsubsection{Calculation of $\bold{b}_q$ and $\bar\pi_q$ for each state $ q\in\mathcal{Q}$}

Similarly as  in Section \ref{FCFS Prioritized Packet Management Policy},  as long as the discrete  state $q(t)$ is unchanged, the age of each element $x_j(t), j\in\{0,\ldots,2\},$ increases at a unit rate with time. Thus, we have $\bold{b}_q=\bold{1}$.
Next, we calculate the stationary probability vector  $\bar{\boldsymbol{\pi}}$.  Using \eqref{eqrt01} and the transition rates among the different states presented in Table \ref{table-4}, it can be shown that the stationary probability vector  $\bar{\boldsymbol{\pi}}$ satisfies $\bar{\boldsymbol{\pi}}\bold{D}=\bar{\boldsymbol{\pi}}\bold{Q}$ with
\begin{align}\nonumber
\bold{D}=&\,\,\text{diag}[\lambda,\lambda+\mu,\lambda_1+\mu,\lambda_1+\mu,\lambda_1+\mu],\\&\nonumber
\bold{Q}=\left[\begin{array}{ccccc}
0   & \lambda_1 & \lambda_2&0& 0 \\
\mu & \lambda_1 & 0&\lambda_2& 0 \\
\mu & 0 & 0&0& \lambda_1 \\
0   & 0 & \mu&\lambda_1& 0 \\
0   & \mu & 0&0& \lambda_1\\
\end{array}\right]\!.
\end{align}
Using the above $
\bar{\boldsymbol{\pi}}\bold{D}=\bar{\boldsymbol{\pi}}\bold{Q} 
$  and  $\textstyle\sum_{q\in\mathcal{Q}}\bar{{\pi}}_q=1$ in \eqref{erwq}, the stationary probabilities are given as
\begin{align}\label{proeqq0}
\bar{\boldsymbol{\pi}}= \dfrac{1}{2\rho_1\rho_2+\rho+1}\left[1~~ \rho_1~~ \rho_2 ~~\rho_1\rho_2 ~~\rho_1\rho_2\right].
\end{align}

\subsubsection{Average AoI Calculation}
By substituting \eqref{proeqq0} into  \eqref{asleq} and solving the corresponding system of linear equations, the values of $\bar{v}_{q0}, \,\,\forall q\in\mathcal{Q}$, are calculated as
\begin{align}\label{V2AoI2}
&\bar{v}_{00}=\dfrac{\rho_1^2(2\rho+5)+(4\rho_1+1)(\rho_2+1)}{\mu\rho_1(1+\rho_1)^2(1+\rho)(1+\rho+2\rho_1\rho_2)},\\\nonumber&
\bar{v}_{10}=\dfrac{(1+\rho_2)(\rho_1^3+4\rho_1^2+1)+\rho_1(5\rho_2+4)}{\mu(1+\rho_2)(1+\rho_1)^2(1+\rho+2\rho_1\rho_2)},\\\nonumber&
\bar{v}_{20}=\dfrac{\rho_2\big(\rho_1^2(2\rho+6)+(4\rho_1+1)(\rho_2+1)\big)}{\mu\rho_1(1+\rho_1)^2(1+\rho)(1+\rho+2\rho_1\rho_2)},\\\nonumber&
\bar{v}_{30}=\dfrac{\rho_2\big((1+\rho_2)(2\rho_1^3+6\rho_1^2+1)+\rho_1(6\rho_2+5)\big)}{\mu(1+\rho_2)(1+\rho_1)^2(1+\rho+2\rho_1\rho_2)},\\\nonumber&
\bar{v}_{40}=\dfrac{\rho_2\big(\rho_1^2(\rho_1^2+5\rho_1+\rho_1\rho_2+4\rho_2+9)+(5\rho_1+1)(1+\rho_2)\big)}{\mu(1+\rho_1)^2(1+\rho)(1+\rho+2\rho_1\rho_2)}.
\end{align}
 Finally,  substituting the values of $\bar{v}_{q0}, \,\,\forall q\in\mathcal{Q}$, in \eqref{V2AoI2} into \eqref{AOIANAL}  results in the  average AoI of source 1 under Policy 2, given in Theorem \ref{theo2}.

 \subsection{Average AoI under Policy 3}\label{Average AoI Under Policy 3}
 The main difference of Policy 3 compared to Policy 2 is that it does not permit preemption in service. 
  The Markov chain and the continuous process of Policy 3 are the same as those for  Policy 2. Thus, the stationary probability vector  $\bar{\boldsymbol{\pi}}$ of Policy 3 is given in \eqref{proeqq0}.   The transitions between the discrete states ${{q_l \rightarrow q'_l}, \,\,\forall l\in \mathcal{L}}$, and their effects on the continuous state $\bold{x}(t)$ are summarized in Table \ref{table-5}. The reset maps of transitions $ l\in\{1,2,4,5,7,8,9,10,11\}$ are the same as those for Policy 2.  Thus, we only explain  transitions $l=3$ and $l=6$ (see Table \ref{table-5}).  
 \begin{table}
 	\centering\small
 	\caption{Table of transitions for the Markov chain of Policy 3}
 	\label{table-5}
 	\begin{tabular}{ |l|l|c|c|c|c|}
 		\hline
 		\textit{l}  & $q_l \rightarrow q'_l $&$\lambda^{(l)}$& $\bold{x}\bold{A}_l$&$\bold{A}_l$&$\bold{v}_{q_l}\bold{A}_l$ \\			
 		\hline			
 		1&$0 \rightarrow  1$& $\lambda_1$&$\left[x_0 ~ 0 ~ x_2 \right]$&$\tiny\begin{bmatrix}
 		1 & 0 & 0\\
 		0 & 0 & 0\\
 		0 & 0 & 1
 		\end{bmatrix}$&$\left[v_{00}~ 0~ v_{02} \right]$	\\		
 		\hline
 		2&$0 \rightarrow  2$&$\lambda_2$&$\left[x_0~ x_0~ x_2 \right]$&$\tiny\begin{bmatrix}
 		1 & 1 & 0\\
 		0 & 0 & 0\\
 		0 & 0 & 1
 		\end{bmatrix}$&$\left[v_{00}~ v_{00}~ v_{02} \right]$\\
 		\hline	
 		3&$1 \rightarrow  1$&$\lambda_1$&$\left[x_0~ x_1~ x_2 \right]$&$\tiny\begin{bmatrix}
 		1 & 0 & 0\\
 		0 & 1 & 0\\
 		0 & 0 & 1
 		\end{bmatrix}$&$\left[v_{10}~ v_{11}~ v_{12} \right]$\\
 		\hline	
 		4&$1 \rightarrow  3$&$\lambda_2$&$\left[x_0~ x_1~ x_1 \right]$&$\tiny\begin{bmatrix}
 		1 & 0 & 0\\
 		0 & 1 & 1\\
 		0 & 0 & 0
 		\end{bmatrix}$&$\left[v_{10}~ v_{11}~ v_{11} \right]$\\
 		\hline	
 		5&$2 \rightarrow 4 $&$\lambda_1$&$\left[x_0~ x_0~ 0 \right]$&$\tiny\begin{bmatrix}
 		1 & 1 & 0\\
 		0 & 0 & 0\\
 		0 & 0 & 0
 		\end{bmatrix}$&$\left[v_{20}~ v_{20}~ 0 \right]$\\
 		\hline	
 		6&$3 \rightarrow  3$&$\lambda_1$&$\left[x_0~ x_1~ x_1 \right]$&$\tiny\begin{bmatrix}
 		1 & 0 & 0\\
 		0 & 1 & 1\\
 		0 & 0 & 0
 		\end{bmatrix}$&$\left[v_{30}~ v_{31}~ v_{31} \right]$\\
 		\hline	
 		7&$4 \rightarrow 4 $&$\lambda_1$&$\left[x_0~ x_0~ 0 \right]$&$\tiny\begin{bmatrix}
 		1 & 1 & 0\\
 		0 & 0 & 0\\
 		0 & 0 & 0
 		\end{bmatrix}$&$\left[v_{40}~ v_{40}~ 0 \right]$\\
 		\hline	
 		8&$1 \rightarrow  0$&$\mu$&$\left[x_1~ x_1~ x_2 \right]$&$\tiny\begin{bmatrix}
 		0 & 0 & 0\\
 		1 & 1 & 0\\
 		0 & 0 & 1
 		\end{bmatrix}$&$\left[v_{11}~ v_{11}~ v_{12} \right]$\\
 		\hline	
 		9&$2 \rightarrow  0$&$\mu$&$\left[x_0~ x_1~ x_2 \right]$&$\tiny\begin{bmatrix}
 		1 & 0 & 0\\
 		0 & 1 & 0\\
 		0 & 0 & 1
 		\end{bmatrix}$&$\left[v_{20}~ v_{21}~ v_{22} \right]$\\
 		\hline	
 		10&$3 \rightarrow  2$&$\mu$&$\left[x_1~ x_1~ x_2 \right]$&$\tiny\begin{bmatrix}
 		0 & 0 & 0\\
 		1 & 1 & 0\\
 		0 & 0 & 1
 		\end{bmatrix}$&$\left[v_{31}~ v_{31}~ v_{32} \right]$\\
 		\hline	
 		11&$4 \rightarrow  1$&$\mu$&$\left[x_0~ x_2~ x_2 \right]$&$\tiny\begin{bmatrix}
 		1 & 0 & 0\\
 		0 & 0 & 0\\
 		0 & 1 & 1
 		\end{bmatrix}$&$\left[v_{40}~ v_{42}~ v_{42} \right]$\\
 		\hline			
 	\end{tabular}
 \end{table}

\begin{itemize}
 	\item \textit{l=3}: A source 1 packet is under service and a source 1 packet arrives. According to Policy 3, the arrived  source 1 packet is blocked and cleared.
 	In this transition, we have $x'_0=x_0$ because there is no departure. The delivery of the packet under service reduces the AoI to $x_1$, and thus,  we have $x'_1=x_1$.
 	 Since the queue is empty, $x_2$ becomes irrelevant, and thus, we have  $x'_2=x_2$.  
 	\item \textit{l=6}: A source 1 packet is under service, the  packet in the queue is a source 2 packet,  and a  source 1 packet arrives. The arrived source 1 packet is blocked and cleared.
 	In this transition, we have $x'_0=x_0$ because there is no departure. 	The delivery of the packet under service reduces the AoI to $x_1$, and thus,  we have $x'_1=x_1$.
 	Since the  packet in the queue is a source 2 packet, its delivery does not change the AoI of source 1, and thus
 	we have  $x'_2=x_1$.   
 \end{itemize}

Having  the stationary probability vector  $\bar{\boldsymbol{\pi}}$ (given in \eqref{proeqq0}) and the table of transitions (Table \ref{table-5}), we can form the system of linear equations \eqref{asleq}. By 
 solving the  system of linear equations, the values of $\bar{v}_{q0}, \,\,\forall q\in\mathcal{Q}$, are calculated as
 \begin{align}\label{V2AoI20}
 &\bar{v}_{00}=\dfrac{\rho_1^3+\rho_1^2((\rho_2+2)^2-1)+(\rho_2+1)^2(3\rho_1+1)}{\mu\rho_1(1+\rho_1)(1+\rho_2)(1+\rho)(1+\rho+2\rho_1\rho_2)},\\\nonumber&
 \bar{v}_{10}=\dfrac{(1+\rho_2)(2\rho_1^2+1)+\rho_1(4\rho_2+3)}{\mu(1+\rho_2)(1+\rho_1)(1+\rho+2\rho_1\rho_2)},\\\nonumber&
 \bar{v}_{20}=\dfrac{\rho_2\big(\rho_1^3(\rho_2+2)+\rho_1^2(\rho_2^2+5\rho_2+4)+(3\rho_1+1)(\rho_2+1)^2\big)}{\mu\rho_1(1+\rho_1)(1+\rho_2)(1+\rho)(1+\rho+2\rho_1\rho_2)},\\\nonumber&
 \bar{v}_{30}=\dfrac{\rho_2\big((\rho_2+1)(3\rho_1^2+1)+\rho_1(5\rho_2+4)\big)}{\mu(1+\rho_1)(1+\rho_2)(1+\rho+2\rho_1\rho_2)},\\\nonumber&
 \bar{v}_{40}=\dfrac{\rho_2\big(\rho_1^3(2\rho_2+3)+2\rho_1^2((\rho_2+2)^2-1)+(4\rho_1+1)(\rho_2+1)^2\big)}{\mu(1+\rho_1)(1+\rho_2)(1+\rho)(1+\rho+2\rho_1\rho_2)}.
 \end{align}
 Finally,  substituting the values of $\bar{v}_{q0}, \,\,\forall q\in\mathcal{Q}$, in \eqref{V2AoI20} into \eqref{AOIANAL}  results in the  average AoI of source 1 under Policy 3, given in Theorem \ref{theo3}.

\section{Numerical Results}\label{Numerical Results}
In this section, we show the effectiveness of the proposed packet management policies in terms of the sum average AoI and fairness between the different sources in the system. Moreover, we compare our policies against the following existing policies: 
 the source-agnostic packet management policies LCFS-S and LCFS-W proposed in \cite{8469047}, and the priority based packet management policies  proposed in \cite{8437591}, which we term PP-NW and PP-WW. Under the LCFS-S policy, a new arriving packet preempts any packet that is currently under service (regardless of the source index). Under the LCFS-W policy, a new arriving packet replaces any older packet waiting in the queue (regardless of the source index); however, the new packet has to wait for any packet under service to finish. Under the  PP-NW policy, there is no waiting room
 and an update under service is preempted on arrival of an equal or
 higher priority update. Under the PP-WW policy, there is a waiting room for at
 most one update and preemption is allowed in waiting but not in
 service. 
 Without loss of generality, for the PP-NW and PP-WW policies, we assume that source 2 has higher priority than source 1; for the opposite case, the results are symmetric.

\subsubsection{Average AoI}
Fig. \ref{1Comparision} depicts the contours of achievable average AoI  pairs
$(\Delta_1,\Delta_2)$ for fixed values of system load $\rho=\rho_1+\rho_2$ under different packet management policies with normalized service rate $\mu=1$; in Fig.~\ref{1Comparision}(a),  ${\rho=1}$   and in Fig.~\ref{1Comparision}(b), ${\rho=6}$. This figure shows that under an appropriate packet management policy in the system (either in the queue or server), by increasing the load of the system the average AoI decreases.
Besides that, it shows that Policy 2 provides the lowest average AoI as compared to the other policies. 

\begin{figure}
	\centering
	\subfloat[]
	{\includegraphics[width=0.54\linewidth,trim = 10mm 0mm 20mm 10mm,clip]{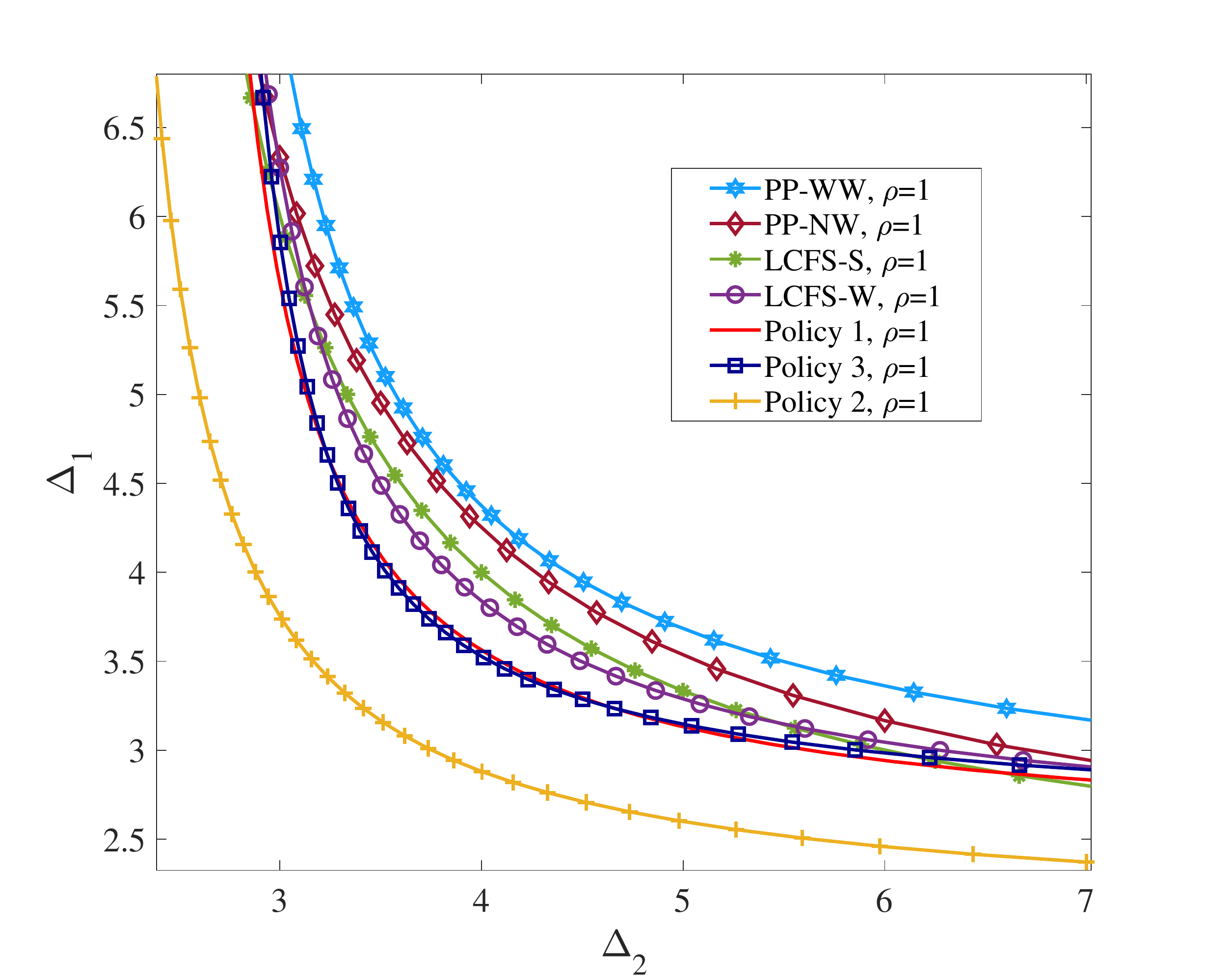}}\\	
	\subfloat[]
	{\includegraphics[width=0.55\linewidth,trim = 10mm 0mm 20mm 10mm,clip]{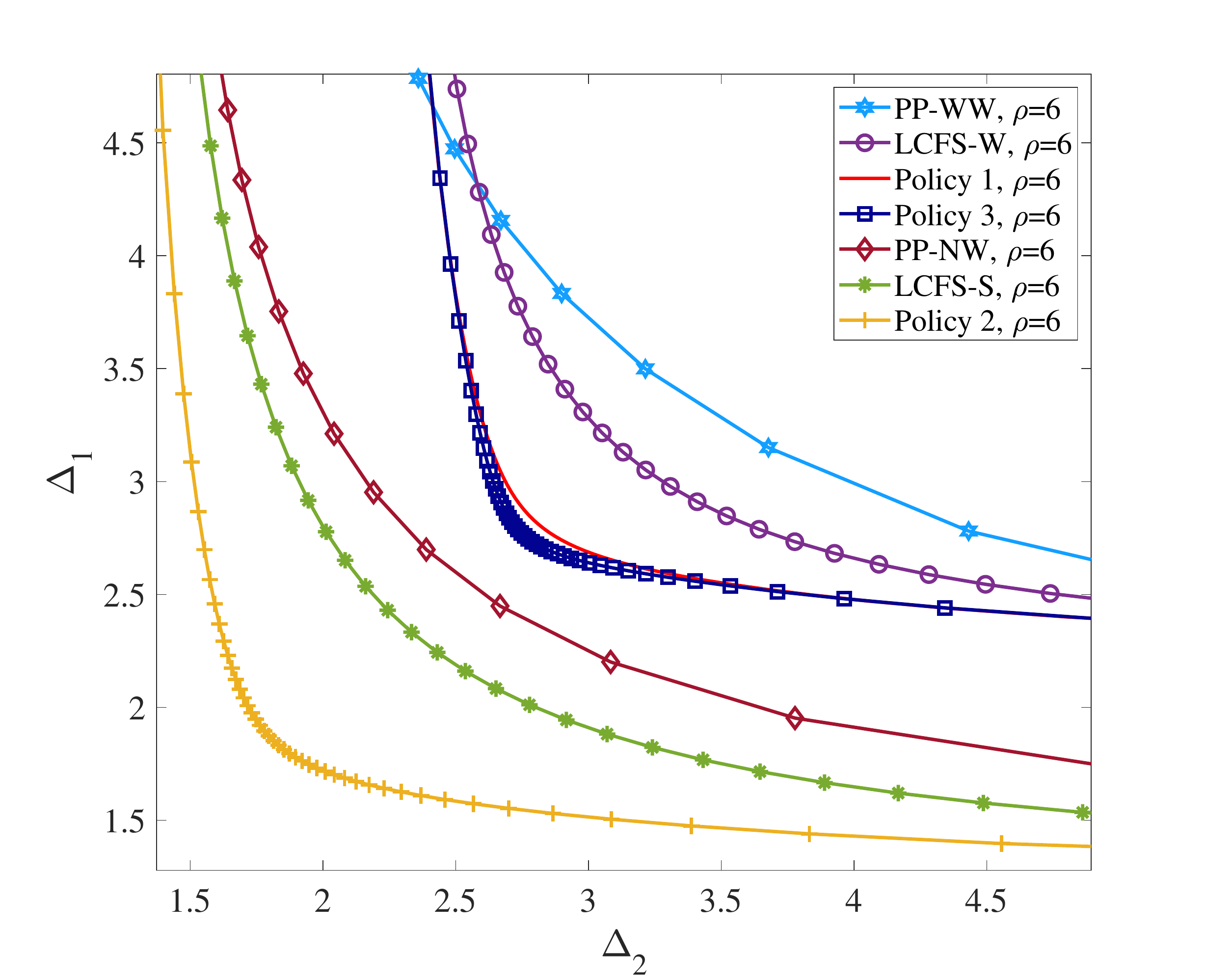}}\\
	\caption{ The average AoI of  sources 1  and 2 under different packet management policies for $\mu=1$ with (a)  $\rho=\rho_1+\rho_2=1$, and  (b)  $\rho=\rho_1+\rho_2=6$.}
	\vspace{-12mm}
	\label{1Comparision}
\end{figure}

\subsubsection{Sum Average AoI}
  Fig. \ref{2Comparision} depicts sum average AoI as a function of $\rho_1$ under different packet management policies with $\mu=1$; in Fig. \ref{2Comparision}(a),  ${\rho=\rho_1+\rho_2=1}$   and in Fig. \ref{2Comparision}(b), ${\rho=\rho_1+\rho_2=6}$. 
  This figure shows that Policy 2 provides the lowest average AoI for all values of $ \rho_1 $ as compared to the other policies. 
  In addition, we can observe that among Policy 1, Policy 3, PP-NW, PP-WW, LCFS-S, and LCFS-W policies, the policy that achieves the lowest value of the sum average AoI depends on the system parameters. Moreover, we can observe that under the PP-NW and PP-WW policies the minimum value of sum average AoI is achieved for a high value of $\rho_1$. This is because when priority is with source 2, a high value of $\rho_1$ is needed to compensate for the priority. In addition, we can see that for a  high value of total load, i.e., $ \rho=6 $, the range of values of $ \rho_1 $ for which  PP-NW and PP-WW policies operate well becomes narrow. 
\begin{figure}
	\centering
	\subfloat[]
	{\includegraphics[width=0.53\linewidth,trim = 10mm 0mm 20mm 10mm,clip]{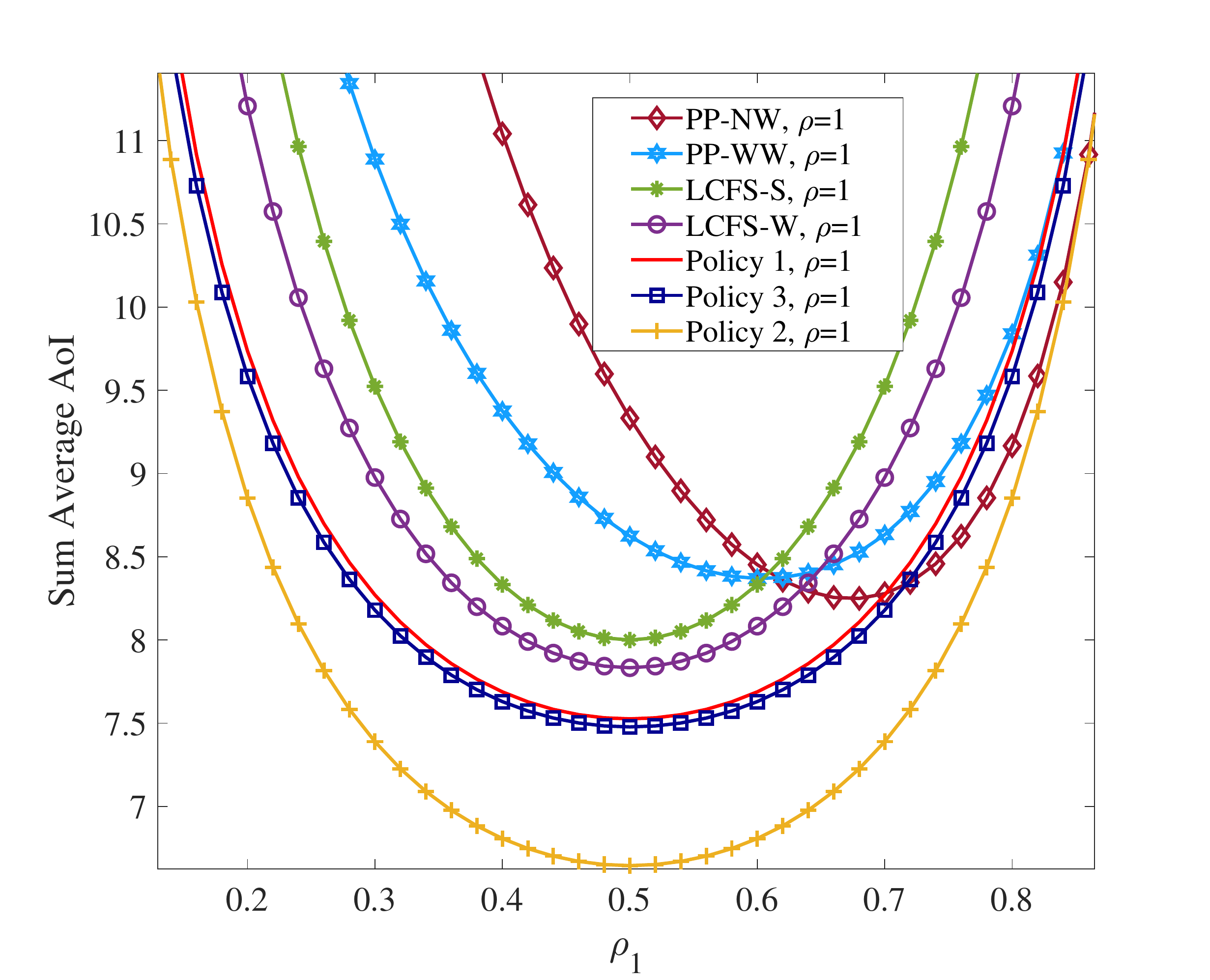}}\\	
	\subfloat[]
	{\includegraphics[width=0.54\linewidth,trim = 10mm 0mm 20mm 10mm,clip]{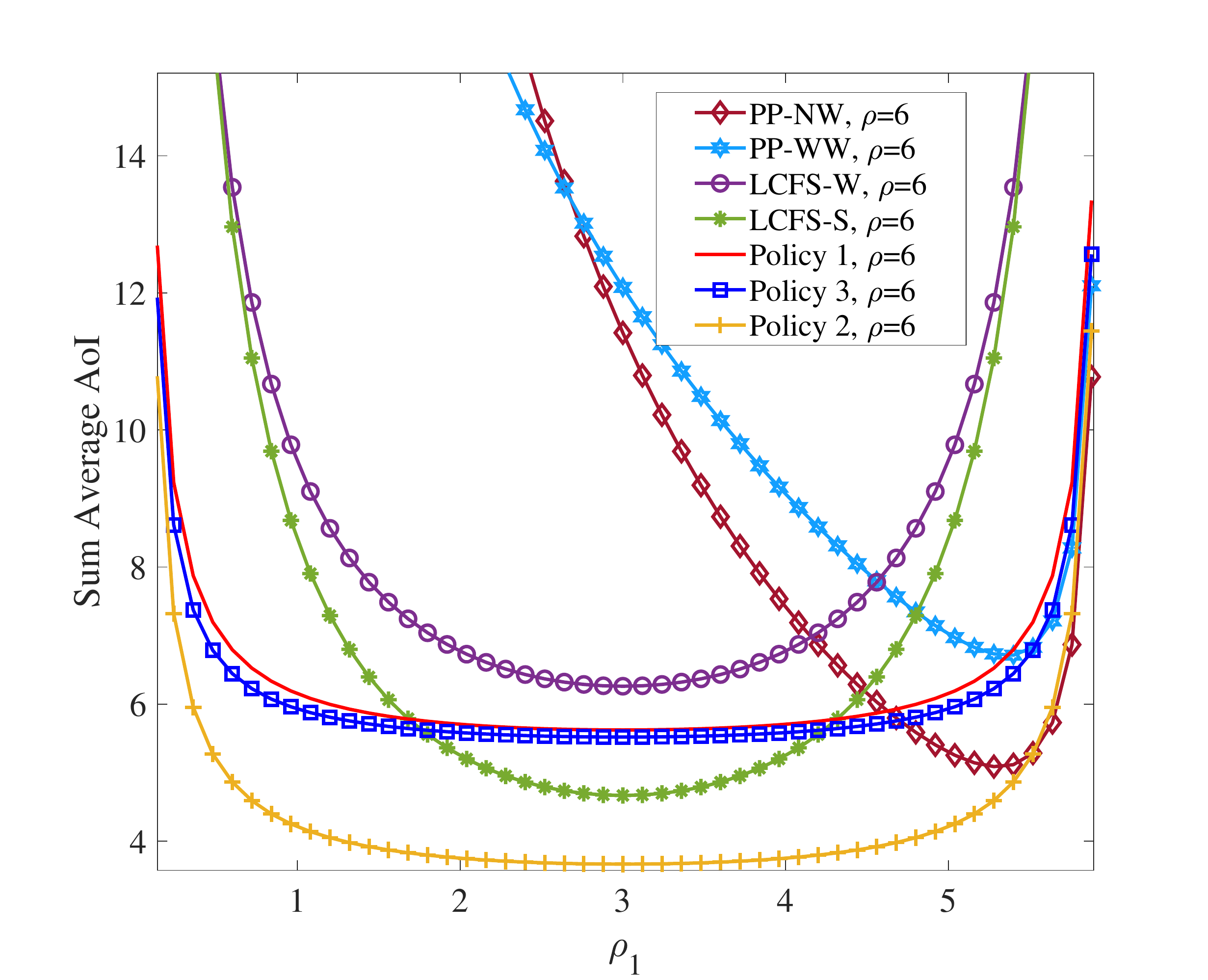}}\\
	\caption{Sum average AoI as a function of $\rho_1$ under different packet management policies for $\mu=1$ with (a)  $\rho=\rho_1+\rho_2=1$, and  (b)  $\rho=\rho_1+\rho_2=6$.}
	\vspace{-12mm}
	\label{2Comparision}
\end{figure}

\subsubsection{Fairness}
For some applications, besides the sum average AoI,  the individual average AoI of each source is critical. In this case, fairness between different sources becomes important to be taken into account.  To compare the fairness between different sources under the different packet management policies, we use the Jain's fairness index \cite{6547814}. For the average AoI of sources 1 and 2, the Jain's fairness index $J(\Delta_1,\Delta_2)$ is defined as \cite[Definition~1]{6547814} \cite[Section~3]{8901143halim} 
\begin{align}\label{01mnn}
J(\Delta_1,\Delta_2)=\dfrac{(\Delta_1+\Delta_2)^2}{2(\Delta_1^2+\Delta_2^2)}.
\end{align}
The Jain's index  $J(\Delta_1,\Delta_2)$ is continuous and lies in $[0.5,1]$, where
$  J(\Delta_1,\Delta_2) = 1 $ indicates the fairest situation in the system.
\begin{figure}
	\centering
	\subfloat[]
	{\includegraphics[width=0.55\linewidth,trim = 10mm 0mm 20mm 15mm,clip]{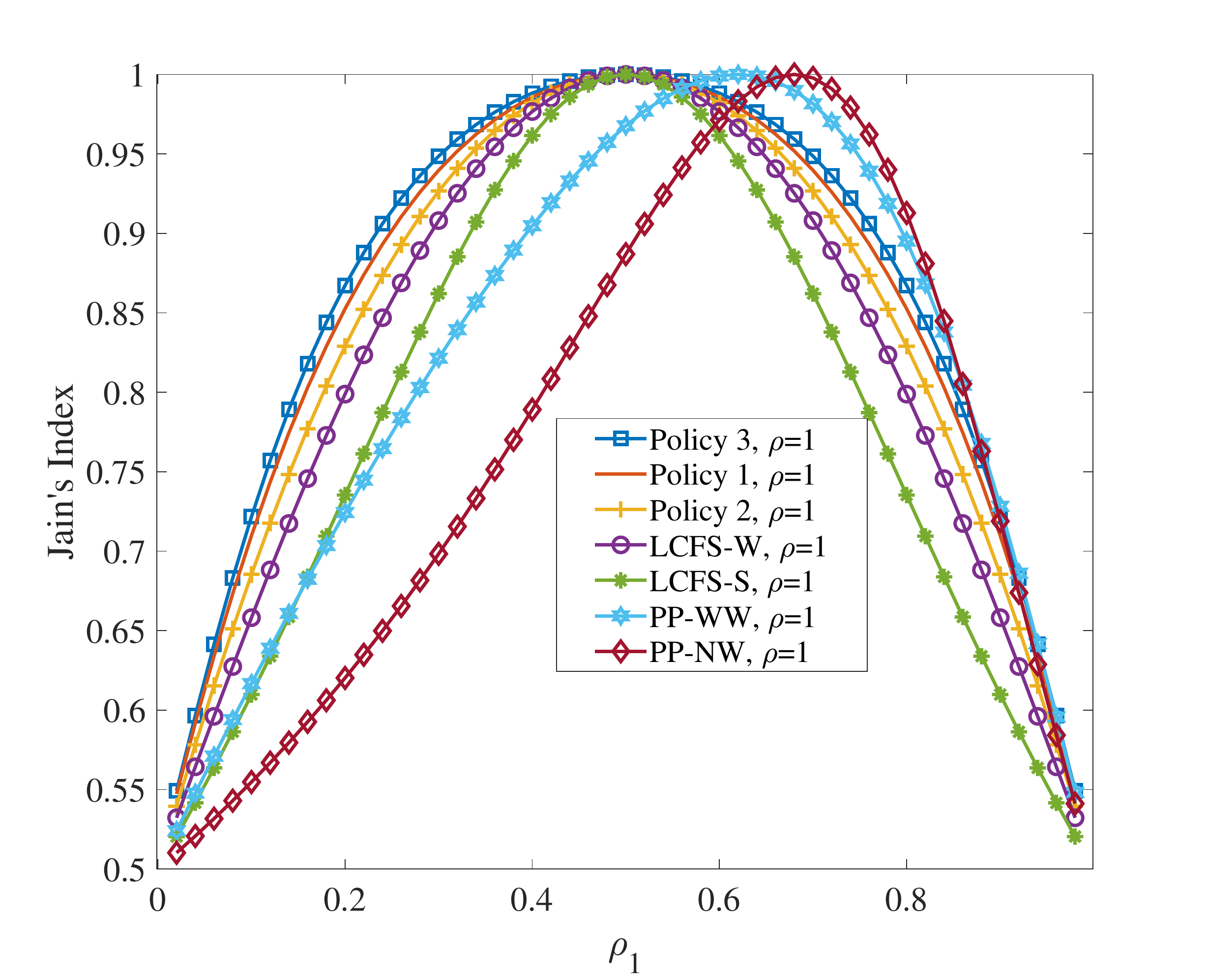}}\\
	\subfloat[]
	{\includegraphics[width=0.55\linewidth,trim = 10mm 0mm 20mm 10mm,clip]{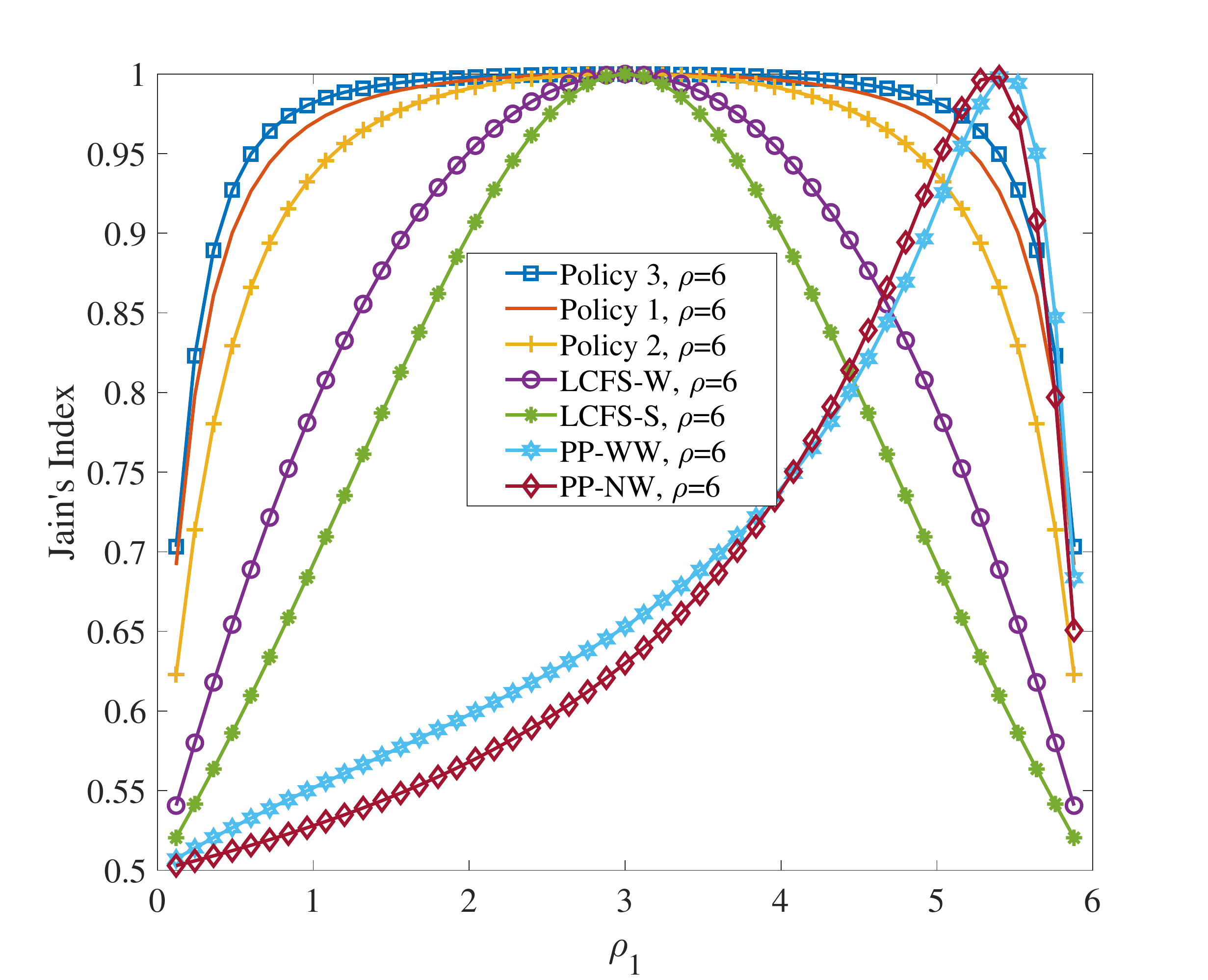}}\\	
	\caption{The Jain's fairness index for  the average AoI of sources 1  and 2 as a function of $\rho_1$ under different packet management policies for  $\mu=1$ with (a)  $\rho=\rho_1+\rho_2=1$, and  (b)  $\rho=\rho_1+\rho_2=6$.}
	\vspace{-12mm}
	\label{FairnessComparision}
\end{figure}

Fig. \ref{FairnessComparision} depicts the Jain's fairness index for the average AoI of sources 1  and 2 as a function of $\rho_1$ under different packet management policies with $\mu=1$;  in Fig. \ref{FairnessComparision}(a),  ${\rho=\rho_1+\rho_2=1}$  and in Fig. \ref{FairnessComparision}(b),  ${\rho=\rho_1+\rho_2=6}$.
As it can be seen, among Policy 1, Policy 2, Policy 3, LCFS-S, and LCFS-W policies, the LCFS-S policy  provides the lowest fairness in the system. This is because 
the packets of a source with a lower packet arrival rate are most of the time  preempted by the packets of the other source having a higher packet arrival rate. In addition, we  observe that the proposed source-aware policies (i.e., Policy 1, Policy 2, and Policy 3) in general provide better fairness than that of the other policies and Policy 3 provides the fairest situation in the system. This is because under these policies, a packet in the queue or server can be preempted  \textit{only} by a packet with the same source index. Similarly as in Fig. \ref{2Comparision}(b), for the high load case, the range of values of $ \rho_1 $ for which  PP-NW and PP-WW policies provide a good fairness becomes narrow.

\section{Conclusions}\label{Conclusions}
We considered a status update system consisting of two independent sources, one server, and one  sink. We proposed three source-aware packet management policies where differently from the existing works, a packet in the system can be preempted  only by a packet with the same source index.  We derived the average AoI for each source under the proposed packet management policies using the SHS technique. 
 The numerical results showed that Policy 2 results in a lower sum average  AoI in the system compared to the  existing policies. In addition, the experiments showed that in general the proposed source-aware policies result in higher fairness in the system than that of the existing policies and, in particular, Policy 3 provides the fairest situation in the system.
 
  The present paper opens several research directions for future study. It would be interesting to extend the results for more than two sources. The same methodology as used in this paper can be applied for more than two sources; however, the complexity of the calculations increases exponentially with the number of sources. Another interesting future work includes calculating the stationary distribution of the AoI under the proposed packet management policies.

\appendices
\section{Values of $\bar{v}_{q0}, \,\,\forall q\in\mathcal{Q},$ for Policy 1} \label{Valuesof v appendix}

\begin{align}\label{VAoI0}
\bar{v}_{00}=\dfrac{3\rho_1^4+\rho_1^3(5\rho_2+9)+\rho_1^2(2\rho_2^2+11\rho_2+10)+\rho_1(4\rho_2^2+6\rho_2+5)+\rho_2^2+\rho_2+1}{\mu\rho_1(1+\rho_1)^2\big((\rho+1)^2-\rho_2\big)\big(\rho^2+\rho(2\rho_1\rho_2+1)+1\big)},
\end{align}
\begin{align}\label{VAoI1}
\bar{v}_{10}=\dfrac{\rho_2^5+3\rho_2^4+4\rho_2^3+3\rho_2^2+\rho_2+\sum_{k=1}^{7}\rho_1^k\gamma_{1,k}}{\mu\rho_1(1+\rho)(1+\rho_2)(1+\rho_1)^2\big((\rho+1)^2-\rho_2\big)\big(\rho^2+\rho(2\rho_1\rho_2+1)+1\big)},
\end{align}
where
\begin{align}\nonumber
&\gamma_{1,1}=4\rho_2^5+16\rho_2^4+26\rho_2^3+22\rho_2^2+9\rho_2+1,~~~~
\gamma_{1,2}= 2\rho_2^5+25\rho_2^4+64\rho_2^3+70\rho_2^2+35\rho_2+6,\\\nonumber&
\gamma_{1,3}=11\rho_2^4+62\rho_2^3+107\rho_2^2+72\rho_2+16,~~~~
\gamma_{1,4}= \rho_2^4+23\rho_2^3+75\rho_2^2+77\rho_2+23,\\\nonumber&
\gamma_{1,5}=3\rho_2^3+23\rho_2^2+41\rho_2+18,~~~~~~~~~~~~~~~~
\gamma_{1,6}=3\rho_2^2+10\rho_2+7,
~~~
\gamma_{1,7}=\rho^2_2+10\rho_2+1.
\end{align}
\begin{align}\label{VAoI22}
\bar{v}_{20}=\dfrac{\rho_2^6+4\rho_2^5+7\rho_2^4+7\rho_2^3+4\rho_2^2+\rho_2+\sum_{k=1}^{7}\rho_1^k\gamma_{2,k}}{\mu(1+\rho)(1+\rho_1)^2(1+\rho_2)^2\big((\rho+1)^2-\rho_2\big)\big(\rho^2+\rho(2\rho_1\rho_2+1)+1\big)},
\end{align}
where
\begin{align}\nonumber
&\gamma_{2,1}=5\rho_2^6+23\rho_2^5+46\rho_2^4+51\rho_2^3+32\rho_2^2+10\rho_2+1,~~~~\\\nonumber&
\gamma_{2,2}=4\rho_2^6+36\rho_2^5+108\rho_2^4+156\rho_2^3+119\rho_2^2+46\rho_2+7,\\\nonumber&
\gamma_{2,3}=\rho_2^6+2\rho_2^5+100\rho_2^4+213\rho_2^3+222\rho_2^2+111\rho_2+21,~~~~~~\\\nonumber&
\gamma_{2,4}=4\rho_2^5+40\rho_2^4+134\rho_2^3+202\rho_2^2+138+33,~~~~~~
\gamma_{2,5}=6\rho_2^4+39\rho_2^3+89\rho_2^2+87\rho_2+28,\\\nonumber&
\gamma_{2,6}=4\rho_2^3+18\rho_2^2+26\rho_2+12,~~~~~~~~~~~~~~~~~~~~~~~~~~
\gamma_{2,7}=\rho_2^2+3\rho_2+2.
\end{align}
\begin{align}\label{VAoI3}
\bar{v}_{30}=\dfrac{\rho_2^6+3\rho_2^5+4\rho_2^4+3\rho_2^3+\rho_2^2+\sum_{k=1}^{7}\rho_1^k\gamma_{3,k}}{\mu\rho_1(1+\rho)(1+\rho_2)(1+\rho_1)^2\big((\rho+1)^2-\rho_2\big)\big(\rho^2+\rho(2\rho_1\rho_2+1)+1\big)},
\end{align}
where
\begin{align}\nonumber
&\gamma_{3,1}=5\rho_2^6+19\rho_2^5+30\rho_2^4+25\rho_2^3+10\rho_2^2+\rho_2,~
\gamma_{3,2}= 5\rho_2^6+37\rho_2^5+84\rho_2^4+87\rho_2^3+41\rho_2^2+6\rho_2,\\\nonumber&
\gamma_{3,3}=2\rho_2^6+27\rho_2^5+101\rho_2^4+149\rho_2^3+91\rho_2^2+18\rho_2,~
\gamma_{3,4}=8\rho_2^5+55\rho_2^4+126\rho_2^3+110\rho_2^2+30\rho_2,\\\nonumber&
\gamma_{3,5}=12\rho_2^4+51\rho_2^3+69\rho_2^2+27\rho_2,~~~~~~~~~~~~
\gamma_{3,6}=8\rho_2^3+20\rho_2^2+12\rho_2,~~~~~~~~
\gamma_{3,7}=2\rho_2^2+2\rho_2.
\end{align}
\begin{align}\label{VAoI4}
\bar{v}_{40}=\dfrac{\rho_2^7+4\rho_2^6+7\rho_2^5+7\rho_2^4+4\rho_2^3+\rho_2^2+\sum_{k=1}^{7}\rho_1^k\gamma_{4,k}}{\mu(1+\rho)(1+\rho_2)^2(1+\rho_1)^2\big((\rho+1)^2-\rho_2\big)\big(\rho^2+\rho(2\rho_1\rho_2+1)+1\big)},
\end{align}
where
\begin{align}\nonumber
&\gamma_{4,1}=6\rho_2^7+27\rho_2^6+53\rho_2^5+58\rho_2^4+36\rho_2^3+11\rho_2^2+\rho_2,~~\\\nonumber&
\gamma_{4,2}= 6\rho_2^7+48\rho_2^6+137\rho_2^5+193\rho_2^4+145\rho_2^3+55\rho_2^2+8\rho_2,\\\nonumber&
\gamma_{4,3}=2\rho_2^7+32\rho_2^6+143\rho_2^5+286\rho_2^4+287\rho_2^3+140\rho_2^2+26\rho_2,~~\\\nonumber&
\gamma_{4,4}=8\rho_2^6+67\rho_2^5+201\rho_2^4+281\rho_2^3+183\rho_2^2+43\rho_2,\\\nonumber&
\gamma_{4,5}=12\rho_2^5+67\rho_2^4+137\rho_2^3+123\rho_2^2+38\rho_2,~~~~~~~\\\nonumber&
\gamma_{4,6}=8\rho_2^4+31\rho_2^3+40\rho_2^2+14\rho_2,~~~~~~~~
\gamma_{4,7}=2\rho_2^3+5\rho_2^2+3\rho_2.
\end{align}
\begin{align}\label{VAoI5}
\bar{v}_{50}=\dfrac{\rho_2^6+3\rho_2^5+4\rho_2^4+3\rho_2^3+\rho_2^2+\sum_{k=1}^{7}\rho_1^k\gamma_{5,k}}{\mu(1+\rho)(1+\rho_2)(1+\rho_1)^2\big((\rho+1)^2-\rho_2\big)\big(\rho^2+\rho(2\rho_1\rho_2+1)+1\big)},
\end{align}
where
\begin{align}\nonumber
&\gamma_{5,1}=6\rho_2^6+22\rho_2^5+34\rho_2^4+28\rho_2^3+11\rho_2^2+\rho_2,~~~~~~\\\nonumber&
\gamma_{5,2}=7\rho_2^6+47\rho_2^5+103\rho_2^4+105\rho_2^3+49\rho_2^2+7\rho_2,\\\nonumber&
\gamma_{5,3}=3\rho_2^6+38\rho_2^5+133\rho_2^4+190\rho_2^3+115\rho_2^2+23\rho_2,\\\nonumber&
\gamma_{5,4}=12\rho_2^5+78\rho_2^4+170\rho_2^3+145\rho_2^2+40\rho_2,~~~~
\gamma_{5,5}=18\rho_2^4+73\rho_2^3+95\rho_2^2+37\rho_2,\\\nonumber&
\gamma_{5,6}=12\rho_2^3+29\rho_2^2+17\rho_2,~~~~
\gamma_{5,7}=3\rho_2^2+3\rho_2.
\end{align}

\bibliographystyle{IEEEtran}
\begin{spacing}{1.55}
\bibliography{RBibliography}
\end{spacing}
\end{document}